\newtheorem{defn}{Definition}
\newtheorem{thm}{Theorem}
\newtheorem{lem}{Lemma}
\newtheorem{cor}{Corollary}
\newtheorem{cla}{Property}
\newtheorem{rem}{Remark}
\newtheorem{example}{Example}
\newcommand{\M}{\mathcal{M}}
\newcommand{\R}{\mathcal{R}}
\newcommand{\I}{\mathcal{I}}
\newcommand{\s}{\mathcal{S}}
\newcommand{\LL}{\mathcal{L}}
\title{\LARGE \bf
Distributed team formation in multi-agent systems:\\stability and approximation}
\author{Lorenzo Coviello and Massimo Franceschetti
\thanks{This work was partially supported by the Army Research Office grant number W911NF-11-1-0363.}
\thanks{The authors are with the Department of Electrical and Computer Engineering,
        University of California San Diego, 9500 Gilman Dr., La Jolla CA, 92093. Emails:
        {\tt\small lcoviell@ucsd.edu, massimo@ece.ucsd.edu}}%
}
\date{}
\begin{document}

\maketitle

\begin{abstract}
We consider a scenario in which leaders are required to recruit teams of followers. Each leader cannot recruit all  followers, but interaction is constrained according to a bipartite network.
The objective for each leader is to reach a state of \emph{local stability} in which it controls a team whose size is equal to a given constraint.
We focus on distributed strategies, in which agents have only local information of the network topology and propose a distributed algorithm in which leaders and followers act according to simple local rules. The performance of the algorithm is analyzed with respect to the convergence to a \emph{stable solution}.

Our results are as follows. For any network, the proposed algorithm is shown to converge to an \emph{approximate} stable solution in polynomial time, namely the leaders quickly form teams in which the total  number of additional followers required to satisfy all team size constraints is an arbitrarily small fraction of the entire population. 
In contrast, for general graphs there can be an exponential time gap between convergence to an approximate solution and to  a stable solution.
\end{abstract}

\section{Introduction}
A multi-agent system (MAS) is composed of many interacting intelligent agents. Agents can be software, robots, or humans, and the system is highly distributed, as agents  do not have a global view of the state and act autonomously of each other. These systems can be used to collectively solve problems  that are difficult to solve by a single entity. Their application ranges from robotics, to disaster response,  social structures, crowd-sourcing etc.
A main feature of MAS is that they can manifest self-organization as well as other complex control paradigms  even when the individual strategies of the agents are  very simple. In short, simple local interaction can conspire to determine complex global behaviors. Examples of such emerging  behaviors are in economics and game theory, where local preferences translate into global equilbria~\cite{roth}, in social sciences, where local exposure governs the spread of innovation~\cite{young2006diffusion}, 
 and in control, where local decision rules determine whether and how rapidly consensus is reached~\cite{blondel2005convergence,nedic2010convergence,nedic2010constrained,olshevsky2006convergence,tahbaz2008necessary,tahbaz2010consensus}.
 
From a practical perspective, the performance of a MAS often depends on how quickly convergence to a global, possibly approximate, solution is reached and it is in general influenced by the network structure. For example,  in the context of information diffusion in social networks, the rate of convergence of the system's dynamics is affected by the underlaying network and the local interaction rules~\cite{kleinberg2007cascading,montanari}.

One of the critical issues in multi-agent systems is coordination. Due to the autonomous behavior of the agents and to the absence of a central controller, coordination  must be distributed.
In the case of human agents, it is also important that the distributed control algorithm is simple enough to be suitable to model basic principles of human behavior~\cite{coviello}.
Two prominent  problems related to consensus and coordination in multi-agent systems are leader election and group formation. In the former case, multiple agents elect a leader that can then assign  tasks~\cite{nancy1996distributed}, while in the latter they divide themselves into teams in such a way that each agent knows to what team it belongs~\cite{franceschetti2001group}.
In both cases agents are all equal and coordination occurs among agents of a single class.
\begin{figure}
\centering
\psfrag{}[posn=c][psposn=c]{}
\includegraphics[scale=0.6]{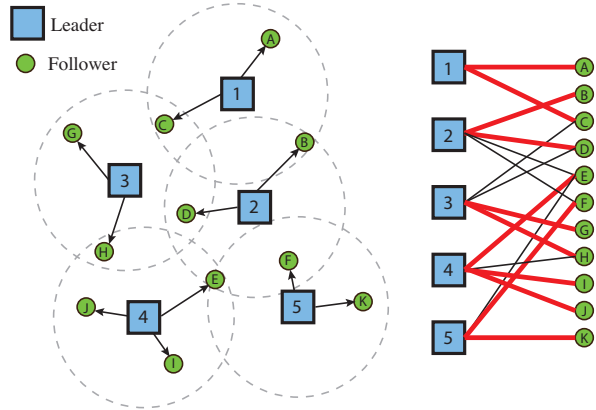}
\caption{
Example of a bipartite network between leaders and followers determined by physical constraints.
Left: each leader can only recruit the followers in its visibility range (dotted circle), arrows represent team membership, and the set of arrows defines a partition of the followers into teams.\newline
Right: the resulting bipartite network. An edge between leader $\ell$ and follower $f$ exists if and only if $f$ is in $\ell$'s visibility range. Matching edges define team membership and are highlighted.}
\label{fig:network}
\end{figure}

We consider a scenario in which there are agents of two classes, \emph{leaders} and \emph{followers}. Each leader must recruit a team of followers whose size is equal to a given constraint, by sending requests to the followers. Followers can only accept or reject incoming leaders' requests.
While multiple followers can be part of a leader's team, each follower can be part of a single team at any time, but is allowed to change team over time.
Moreover, a leader cannot recruit \emph{all} followers, but can only recruit the followers it is in direct communication with. The communication structure between leaders and followers is captured by an arbitrary bipartite network, and we assume that each agent has knowledge of and can interact with its neighbors only. That is, agents only have local knowledge of the underlying network. 
In general, the communication constraints of the population (and therefore the structure of the bipartite network) can be dictated by physical constraints (as for example antenna visibility range or signal to noise ratio threshold), social context, and so on.
A pictorial representation of a bipartite network arising from physical constraints is given in Figure~\ref{fig:network}.

We consider a notion  of stability in which each agent controls a team of adequate size. Each leader has an incentive to reach \emph{local} stability (that is, to build a team of followers of the right size) by dynamically interacting with its neighbors. The question we aim to answer is: can simple local rules lead to stable, or \emph{close} to stable, team formation  in reasonable time?
By ``close to stable'' we mean that the total number of additional followers required to satisfy all team size constraints is an arbitrary small fraction of the entire population.
We propose a simple, distributed, memoryless algorithm in which leaders do not communicate between each other, and we show that, in any network of size $n$, any constant approximation of a stable outcome (or of a suitably defined \emph{best} outcome if a stable one does not exist) is reached in time polynomial in $n$ with high probability. 
In contrast, for general graphs we show through a counterexample that there can be an exponential gap between the time needed to reach stability and that needed to reach approximate stability, that is, to find the \emph{best} solution compared to a \emph{good} solution. 
We remark that, in its simplicity, the proposed algorithm is suitable to model human agents, it can be programmed on simple robots with limited computation abilities, and it is amenable to analysis.

The rest of the paper is organized as follows. After discussing how our work relates to the existing literature, in Section~\ref{sec:problem} we formally define the problem and the notions of stability and approximate stability, in Section~\ref{sec:algorithm} we present the distributed algorithm for leaders and followers, in Sections \ref{sec:approx} and \ref{sec:stable} we present our technical results on the algorithm's performance, and in Section~\ref{sec:simulations} we further discuss  the algorithm's performance by showing some simulations' results.
To prove our result on the convergence to approximate stability, we derive a technical lemma (Lemma~\ref{lem:karp}) that relates the quality of a matching to the existence of particular paths (that we call \emph{deficit-decreasing} paths) of given length. The lemma extends a known combinatorial result by Hopcroft and Karp~\cite{hopcroft} to the setup of many-to-one matching, and can be considered to be of independent interest.

\subsection{Related work}
\label{sec:related}
The problem of team formation that we consider is an example of distributed many-to-one matching in bipartite networks~\cite{ashlagi2011ec,hatfield2011,roth1984evolution}.
The one-to-one case has been previously studied in the context of theoretical computer science~\cite{lotker}  \cite{pettie2004simpler}.
In the  control literature, our work is related to the distributed assignment problem and to group formation in MAS. In this framework, Moore and Passino~\cite{moore2007distributed} proposed a variant of the distributed auction algorithm for the assignment of mobile agents to tasks. Cenedese et al.~\cite{cenedese2010decentralized} proposed a variant of the Stable Marriage algorithm~\cite{gale} to solve the distributed task assignment problem. Abdallah and Lesser~\cite{abdallah2004organization} proposed an ``almost'' distributed algorithm for coalition formation, allowing for a special agent with the role of ``manager''. Gatson and den Jardins~\cite{gaston2005agent} studied a scenario of group formation where agents can adapt to the network structure. Tosic and Agha~\cite{tosic2004maximal} proposed an algorithm for group formation based on the distributed computation of maximal cliques in the underlying network. Further work studied team formation in multi-robot systems~\cite{vig2006multi}, in the case where communication between agents is not allowed~\cite{berman2009optimized}. Other authors considered MAS composed by leaders and followers. To cite a few, Tanner~\cite{tanner2004controllability} derived a necessary and sufficient condition for a group of interconnected agents to be controllable by one of them acting as a leader; Rahmani et al.~\cite{rahmani2009controllability} studied the controlled agreement problem in networks in which certain agents have leader roles, translating graph-theoretic properties into control-theoretic properties; Pasqualetti et al.~\cite{pasqualetti2008steering} analyzed the problem of driving a group of mobile agents, represented by a network of leaders and followers, in which follower act according to a simple consensus rule.

We distinguish ourselves from all mentioned papers, as we propose a fully distributed algorithm for group formation on arbitrary networks in which agents act according to simple local rules and perform very limited computation, and we derive performance guarantees in the form of theorems.
For an exhaustive overview on distributed algorithms in multi-agent systems, the interested reader is referred to the books by Lynch~\cite{nancy1996distributed} and by Bullo et al.~\cite{bullo2009distributed} and the references therein, while the survey by Horling and Lesser~\cite{horling} offers an overview on three decades of research on organizational paradigms as team and coalition formation.

A more recent line of research aims to study how humans connected over a network solve tasks in a distributed fashion~\cite{coviello,enemark,kearns2,kearns3,kearns,mccubbins}.
In the  work of Kearns et al.~\cite{kearns}, human subjects positioned at the vertices of a virtual network were shown to be able to collectively reach a coloring of the network, given only local information about their neighbors. Similar papers further investigated human coordination in the case of coloring~\cite{enemark,kearns2,mccubbins} and consensus~\cite{kearns2,kearns3}, with the main goal of characterizing how performance is affected by the network's structure.
Using experimental data of maximum matching games performed by human subjects in a laboratory setting, Coviello et al.~\cite{coviello} proposed a simple algorithmic model of human coordination that allows complexity analysis and prediction.

Finally, related to our work is also the research on social exchange networks~\cite{cook1983distribution,kleinberg2008balanced}, that considers a networked scenario in which each edge is associated to an economic value, nodes have to come to an agreement on how to share these values, and each agent can only finalize a single mutual exchange with a single neighbor. Recently, Kanoria et al.~\cite{kanoria2011fast} proposed a distributed algorithm that reaches approximate stability in linear time. However, we consider a different setup since we allow leaders to build teams of multiple followers.


\section{Problem formulation}
\label{sec:problem}
We consider a population composed of agents of two different classes: leaders and followers. Each leader is required to recruit a team of followers whose size is equal to a given constraint, by sending requests to the followers. Followers can only accept or reject leaders' requests. While multiple followers can be in a leader's team, each follower can be part of a single team at a time, but is allowed to change team over time.
A leader is not allowed to recruit \emph{all} followers, but can only recruit the followers it is in direct communication with.
The communication constraints of the population are captured by a bipartite network $G=(L\cup F,E)$ whose nodes' partition is given by the set $L$ of leaders and the set $F$ of followers, and where there exists an edge $(f,\ell)\in E$ between follower $f$ and leader $\ell$ if and only if $f$ and $\ell$ can communicate between each other (see Figure~\ref{fig:network}). Let $N_{\ell}=\{f\in F:(f,\ell)\in E\}$ be the neighborhood of $\ell\in L$.
For each $\ell\in L$, leader $\ell$ is required to recruit a team of  $c_{\ell}$ followers, where $c_{\ell}\ge 1$.
\begin{defn}[Matching]
A subset $M\subseteq E$ is a matching of $G$ if for each $f\in F$ there exists at most a single $\ell\in L$ such that $(\ell,f)\in M$.
\end{defn}

The definition of matching is consistent with the fact that multiple followers can be part of a leader's team.
There is a one-to-one correspondence between matchings $M$ of $G$ and tuples of teams $\{T_{\ell}(M):\ell\in L\}$, where $T_{\ell}(M)$ denotes the team of leader $\ell$ under the matching $M$. We have that $T_{\ell}(M)=\{f\in F: (\ell,f)\in M\}\subseteq N_{\ell}$ for every matching $M$.
We consider the following notion of stability.
\begin{defn}[Stable matching]
Given constraints $c_{\ell}$ for each $\ell\in L$, a matching $M$ of $G$ is stable if and only if $\vert T_{\ell}(M)\vert = c_{\ell}$ for all $\ell\in L$.
\end{defn}

Depending on the constraints $c_{\ell}$, a network $G$ might not admit a stable matching.
Nonetheless, given a matching of $G$, we are interested in assessing its \emph{quality}.
Our main result builds on the following definitions of \emph{deficit} of a leader and deficit of a matching.
\begin{defn}[Deficit of a leader]
Let $\ell$ be a leader with constraint $c_{\ell}\ge 1$, and $M$ be a matching of $G$. The deficit of $\ell$ under the matching $M$ is
$$d_{\ell}(M) = c_{\ell}-\vert T_{\ell}(M)\vert.$$
\end{defn} 
\begin{defn}[Deficit of a matching]
Given constraints $c_{\ell}\ge 1$ for each $\ell\in L$, the deficit of a matching $M$ of $G$ is
$$d(M) = \sum_{\ell\in L} d_{\ell}(M) = \sum_{\ell\in L}\left(c_{\ell}-\vert T_{\ell}(M)\vert\right).$$
\end{defn} 

In words, $d_{\ell}(M)$ is the number of additional followers leader $\ell$ needs to satisfy its size constraint.
Similarly, $d(M)$ sums the numbers of additional followers each leader needs to satisfy its size constraint.
Given a matching $M$, we say that a leader $\ell$ is \emph{poor} if $d_{\ell}(M)>0$ (that is, $\vert T_{\ell}(M)\vert < c_{\ell}$) and \emph{stable} if $\vert T_{\ell}(M)\vert = c_{\ell}$. In this work, we do not consider the case of $\vert T_{\ell}(M)\vert > c_{\ell}$ since we assume that each leader $\ell$ never recruits more than $c_{\ell}$ followers simultaneously. This can be justified by the fact that recruiting additional followers might be costly.

Observe that only poor leaders contribute to $d(M)$, and that $M$ is stable if and only if $d(M)=0$.
Given $G$, two matchings of $G$ can be compared with respect to their deficit, and the best matching of $G$ can be defined as one minimizing the deficit.
\begin{defn}[Best matching]
A matching $M$ of $G$ is a best matching of $G$ if $d(M)\le d(M')$ for every matching $M'$ of $G$.
\end{defn}

Observe that a stable matching is also a best matching.
Moreover, if $G$ admits a stable matching, $d(M)$ quantifies how much $M$ differs from a stable matching of $G$. In general, if $M^*$ is a best matching of $G$ with $d(M^*)=d^*$, then, $d(M)-d^*$ tells how much $M$ differs from a best matching of $G$.
Given a matching $M$ of $G$, the following definition provides a measure of how well $M$ approximates a best matching of $G$.
\begin{defn}[Approximate best matching]
Fix $\varepsilon\in[0,1]$, and let $m$ be the  number of followers in $G$.
Let $M^*$ be a best matching of $G$. Then, a matching $M$ is a $(1-\varepsilon)$-approximate best matching of $G$ if $d(M)-d(M^*)<\varepsilon m$.
\end{defn}

When $G$ admits a stable matching, we are interested in the notion of approximate stable matching.
\begin{defn}[Approximate stable matching]
Let $G$ admit a stable matching.
Fix $\varepsilon\in[0,1]$, and let $m$ be the  number of followers in $G$.
Then, a matching $M$ is a $(1-\varepsilon)$-approximate stable matching of $G$ if $d(M)<\varepsilon m$.
\end{defn}

\section{The algorithm}
\label{sec:algorithm}
We now present a distributed algorithm for team formation. Time is divided into rounds, and each round is composed by two stages. In the first stage, each leader acts according to the algorithm in Table~\ref{alg:leader}, and in the second stage each follower acts according to the algorithm in Table~\ref{alg:follower}.

First consider a leader $\ell$, and let $M$ be the matching at the beginning of a given round. If $\ell$ is poor (that is, $\vert T_{\ell}(M)\vert < c_{\ell}$) and $\vert T_{\ell}(M)\vert < \vert N_{\ell}\vert$  (that is, $\ell$ is not already matched with all followers in $N_{\ell}$) then, with probability $p$ (where $p\in(0,1]$ is a fixed constant), $\ell$ attempts to recruit an additional follower, chosen as explained below, by sending a \emph{matching request}. 
An unmatched follower in $N_{\ell}$, if any, is chosen uniformly at random; otherwise, a follower in $N_{\ell}\backslash T_{\ell}(M)$ is chosen uniformly at random. In other words, leaders always prefer to recruit followers that are currently unmatched over matched ones. Note that a leader tries to recruit an additional follower after checking if \emph{local stability} holds (that is, after checking if its team size is equal to $c_{\ell}$).

Consider now a follower $f$.
During each round, if $f$ has incoming  requests then each request is rejected independently of the others with probability $1-q$ (where $q\in(0,1]$ is a fixed constant). If all incoming requests are rejected, then $f$ does not change team (if currently matched) or it remains unmatched (if currently unmatched). Otherwise, one among the active requests is chosen uniformly at random, $f$ joins the corresponding leader, and all the other requests are discarded. For ease of presentation, we assume that a follower is equally likely to join a team when unmatched and to change team when currently matched, but all our results hold if we consider different values of $q$ for matched and unmatched followers (and even if we consider a different value of $q$ for each follower, as long as each value is a constant).

\floatname{algorithm}{Table}
\begin{algorithm}[h!]
\caption{Algorithm for leader $\ell\in L$}
\label{alg:leader}
\begin{algorithmic}
\IF{$\vert T_{\ell}(M)\vert < \min\{c_{\ell},\vert N_{\ell}\vert\}$}
\STATE with probability $p$ do the following
\IF{$\exists$ unmatched $f\in N_{\ell}$}
\STATE{choose an unmatched follower $f'\in N_{\ell}$ u.a.r.}
\ELSE
\STATE{choose a follower $f'\in N_{\ell}\backslash T_{\ell}(M)$ u.a.r.}
\ENDIF
\STATE{send a matching request to $f'$}
\ENDIF
\end{algorithmic}
\end{algorithm}

\floatname{algorithm}{Table}
\begin{algorithm}[h!]
\caption{Algorithm for follower $f\in F$}
\label{alg:follower}
\begin{algorithmic}
\IF{$f$ has incoming requests}
\FOR{each leader $\ell$ requesting $f$}
\STATE{with probability $1-q$ reject $\ell$'s request}
\ENDFOR
\IF{there are active requests}
\STATE{select one u.a.r. and join the corresponding team}
\STATE{reject all other requests}
\ENDIF
\ENDIF
\end{algorithmic}
\end{algorithm}

The proposed algorithm enjoys several properties. It is memoryless, the actions of each agent only depend on local information, and the leaders do not communicate between each other. 
Also, it is \emph{self-stabilizing}, that is, once a stable matching is reached, leaders stop recruiting followers.
Moreover, it is a single-stage algorithm, that is, agents never change their behavior until stability is reached.
Finally, observe that the exchanged messages can be represented by a single bit.


\section{Convergence to approximate stable matchings}
\label{sec:approx}
In this section, we only consider networks admitting stable matchings, and we show that, given any network and any constant $\varepsilon\in(0,1)$, a $(1-\varepsilon)$-approximate stable matching is reached in a number of rounds that is polynomial in the network size with high probability.
The assumption that a stable matching exists is for ease of presentation, and all our results also hold  for reaching approximate best matchings, by replacing $d(M)$ with $d(M)-d(M^*)$, where $M^*$ is a best matching of $G$.

Given a network $G$, for every $t\ge 0$, let $M(t)$ be the matching of $G$ at the beginning of round $t$, with deficit $d(M(t))$.
The next property follows from the fact that leaders do not voluntarily disengage from the followers in their teams (and therefore the deficit of a leader increases of a unit only if the deficit of another leader decreases by one unit).
\begin{cla}\label{cla:nonincr}
For $t\ge 0$, $d(M(t))$ is non-increasing in $t$.
\end{cla}

The next property follows from the assumption $c_{\ell}\ge 1,\forall\ell$.
\begin{cla}\label{cla:bound}
If $G$ admits a stable matching, then $d(M(t))\le m$ for every $t\ge 0$.
\end{cla}

We are now ready to state our main result.
\begin{thm}\label{thm:approx}
Let $G$ be a network with $m$ followers and which admits a stable matching. Let $\Delta=\max_{\ell\in L}\vert N_{\ell}\vert$ be the maximum degree of the leaders. Fix $0<\varepsilon<1$, and let 
$c\ge 1+\frac{1}{m(1-\varepsilon)}$.
Then, a $(1-\varepsilon)$-approximate stable matching of $G$ is reached within $c\lfloor 1/\varepsilon\rfloor(\Delta/pq)^{\lfloor 1/\varepsilon\rfloor}m$ rounds of the algorithm with probability at least $1-e^{-cm\varepsilon^2/2}$.
\end{thm}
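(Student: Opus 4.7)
The plan is to combine the many-to-one Hopcroft--Karp-type structural lemma (Lemma~\ref{lem:karp}) with a multiplicative Chernoff bound. The lemma provides the key combinatorial handle: whenever the current matching $M$ satisfies $d(M)\ge\varepsilon m$, the graph admits a deficit-decreasing alternating path of length at most $k_\varepsilon:=\lfloor 1/\varepsilon\rfloor$ with respect to $M$, and flipping such a path reduces the deficit by at least one. The distributed algorithm implements these flips as sequential cascades of request/accept events, so the proof reduces to lower-bounding the per-cascade success probability and then concentrating the number of successful cascades.

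First I would establish a per-block progress bound: in any window of $k_\varepsilon$ consecutive rounds starting from $d(M)\ge\varepsilon m$, the deficit strictly decreases with probability at least $p_0:=(pq/\Delta)^{k_\varepsilon}$. Given a witness path $\ell_0{-}f_1{-}\ell_1{-}\cdots{-}\ell_{k-1}{-}f_k$ of length $k\le k_\varepsilon$ supplied by Lemma~\ref{lem:karp}, the cascade succeeds if in round $i$ leader $\ell_{i-1}$ activates and targets $f_i$ (a factor at least $p/\Delta$, combining activation and the uniform choice among at most $\Delta$ candidates) and $f_i$ in turn joins $\ell_{i-1}$ (a factor at least $q/\Delta$, obtained from the independent $q$-acceptance together with uniform tie-breaking). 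Multiplying across the $k$ steps and using monotonicity in $k$ yields the per-window bound $p_0$.

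Next I would chunk the $T=c\,k_\varepsilon\,(\Delta/pq)^{k_\varepsilon}\,m$ rounds into $N=T/k_\varepsilon=cm/p_0$ disjoint blocks and let $X$ count the blocks in which the deficit strictly decreased. By Properties~\ref{cla:nonincr}--\ref{cla:bound}, the deficit starts at most $m$ and is non-increasing, so failure to reach a $(1-\varepsilon)$-approximate stable matching within $T$ rounds forces $X\le(1-\varepsilon)m$. Since the per-block success probability $p_0$ holds uniformly in history while $d\ge\varepsilon m$, $X$ stochastically dominates a $\mathrm{Bin}(N,p_0)$ variable with mean $cm$. A standard multiplicative Chernoff bound then yields
\begin{equation*}
\Pr\bigl[X\le(1-\varepsilon)m\bigr]\;\le\;e^{-cm\varepsilon^{2}/2},
\end{equation*}
where the hypothesis $c\ge 1+\tfrac{1}{m(1-\varepsilon)}$ is invoked via a short algebraic manipulation to absorb the integer rounding of the deficit threshold and match the advertised exponent.

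The hardest step is the per-block probability bound. During a window of $k_\varepsilon$ rounds many leaders act concurrently, and the intended follower $f_i$ at each step of the cascade may receive competing requests that threaten to dilute its probability of joining the designated leader $\ell_{i-1}$. I would address this by exploiting the follower's independent $q$-acceptance together with uniform tie-breaking to show that $\Pr[f_i\text{ joins }\ell_{i-1}\mid\ell_{i-1}\text{ requests }f_i]$ is at least of order $q/\Delta$ regardless of the concurrent activity of other agents. This history-uniform lower bound is precisely what justifies the stochastic-domination coupling with a genuine binomial and makes the Chernoff step go through.
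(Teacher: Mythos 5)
Your overall architecture is the same as the paper's: use Lemma~\ref{lem:karp} (via Corollary~\ref{cor:path}) to extract a deficit-decreasing path with at most $\lfloor 1/\varepsilon\rfloor$ non-matching edges whenever $d(M)\ge\varepsilon m$, lower-bound the probability that a block of $\lfloor 1/\varepsilon\rfloor$ rounds resolves such a path, then chunk time into independent blocks and apply a multiplicative Chernoff bound, with the hypothesis on $c$ absorbing the boundary terms. That part of your plan is sound and matches the paper.

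The gap is in the per-round probability, which you yourself flag as the hardest step. You charge $(p/\Delta)\cdot(q/\Delta)=pq/\Delta^{2}$ per cascade step, which compounds to $(pq/\Delta^{2})^{\lfloor 1/\varepsilon\rfloor}$ per block; this proves the theorem only with $(\Delta^{2}/pq)^{\lfloor 1/\varepsilon\rfloor}$ in the round count, not the stated $(\Delta/pq)^{\lfloor 1/\varepsilon\rfloor}$. Worse, the extra factor $q/\Delta$ you claim for ``$f_i$ joins $\ell_{i-1}$ given $\ell_{i-1}$ requests $f_i$'' is not justified: the number of competing requests at $f_i$ is bounded by the number of leaders adjacent to $f_i$, which can be as large as $\vert L\vert$ and is unrelated to $\Delta=\max_{\ell}\vert N_{\ell}\vert$, so uniform tie-breaking does not give a history-uniform $q/\Delta$ bound. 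The paper escapes this entirely by \emph{not} requiring the follower to join the designated leader. Since only poor leaders send requests, it suffices that the targeted follower accepts \emph{some} request, which happens with probability at least $q$ conditional on $\ell_{i-1}$'s request arriving (that request alone survives rejection with probability $q$). If the follower is the unmatched endpoint, joining any requester decreases the deficit outright; if it is an interior matched follower, then whichever poor leader it joins, its former leader becomes poor and heads a deficit-decreasing path shorter by two. Either way the relevant potential (deficit, or shortest-path length) improves with probability at least $pq/\Delta$ per round, with no tie-breaking penalty. You need this observation to recover the advertised constant and to make your stochastic-domination step legitimate.
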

\begin{example}
If $\Delta$ is constant in the network size, then one can choose $\varepsilon=1/\log m$, and Theorem~\ref{thm:approx} implies that a $(1-1/\log m)$-approximate stable matching is reached in at most $\mathcal{O}(m^2\log m)$ rounds with probability that goes to one as $m\to\infty$.
\end{example}

To prove Theorem~\ref{thm:approx}, we introduce the notion of \emph{deficit-decreasing} path, that in our setup plays the same role as the augmenting path in the context of one-to-one matching~\cite{diestel2005graph}.
Since we consider bipartite networks, a path alternates leaders and followers.
\begin{defn}[Deficit-decreasing path]
Given a matching $M$ of $G$, a cycle-free path $P=\ell_0,f_1,\ell_1,\ldots,f_k$ (of odd length 2k-1) is a deficit-decreasing path relative to $M$ if $(\ell_i,f_i)\in M$ for all $1\le i\le k-1$, $\ell_0$ is a poor leader, and $f_k$ is an unmatched follower.
\end{defn}

In words, a deficit-decreasing path starts at a poor leader with an edge not in $M$, ends at a follower that is not matched, and alternates edges in $M$ and edges not in $M$. To justify the nomenclature, observe that, if $d(M)>0$ and $P$ is a deficit-decreasing path relative to $M$, a new matching $M'$ such that $d(M')=d(M)-1$ can be obtained by flipping each unmatched edge of $P$ into a matched edge, and vice versa. This is depicted in Fig.~\ref{fig:ddpath}.
\begin{figure}
\centering
\psfrag{}[posn=c][psposn=c]{}
\includegraphics[scale=0.50]{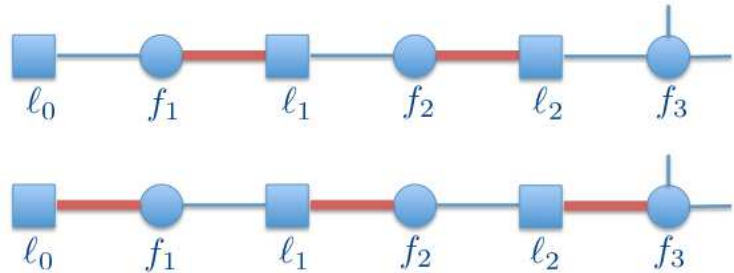}
\caption{A deficit-decreasing path of length $5$ is represented at the top of the figure: $\ell_0$ is a poor leader, $f_3$ is an unmatched follower, and matching edges are highlighted. The path is ``solved'' by turning each matched edge into an unmatched edge and vice versa, as show at the bottom of the figure: $\ell_0$ obtains an additional follower (and therefore its deficit decreases by a unit) and both $\ell_1$ and $\ell_2$ do not change their numbers of followers.}
\label{fig:ddpath}
\end{figure}

The proof of Theorem~\ref{thm:approx} builds on a technical lemma that, given a matching $M$ with $d(M)\ge\varepsilon m$, guarantees the existence of a deficit-decreasing path of length at most $2\lfloor 1/\varepsilon\rfloor$. The existence of such a path allows us to bound the number of rounds needed for a one-unit reduction of the deficit.
Our technical lemma extends a known result by Hopcroft and Karp~\cite[Theorem $1$]{hopcroft} given in the context of one-to-one matching, but our proof is more subtle because leaders can be matched to multiple followers and can have different size constraints $c_{\ell}$. The symmetric difference of two sets $A$ and $B$ is defined as $A\oplus B=(A\backslash B)\cup(B \backslash A)$. Two paths are \emph{follower-disjoint} if they do not share any follower (even though they might share some leader).
\begin{lem}\label{lem:karp}
Let $G$ admit a stable matching $N$. Let $M$ be a matching of $G$ with deficit $d(M)>0$. Then, in $M\oplus N$ there are at least $d(M)$ follower-disjoint deficit-decreasing paths relative to $M$.
\end{lem}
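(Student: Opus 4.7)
My plan is to extend the classical Hopcroft--Karp structural decomposition to the many-to-one setting by splitting each leader $\ell\in L$ into $c_\ell$ independent copies $\ell^{(1)},\ldots,\ell^{(c_\ell)}$ (each retaining the original neighborhood $N_\ell$), lifting $M$ and $N$ to ordinary one-to-one matchings $\hat M$ and $\hat N$ of the resulting bipartite graph $\hat G$, applying the standard decomposition of $\hat M\oplus \hat N$, and finally translating the result back to $G$. When lifting, I would route followers in $T_\ell(M)\cap T_\ell(N)$ to the \emph{same} copy of $\ell$ in both matchings, so that their edges cancel in $\hat M\oplus \hat N$. Because $N$ is stable, $\hat N$ saturates every leader-copy, whereas $\hat M$ leaves exactly $\sum_{\ell}d_\ell(M)=d(M)$ leader-copies unmatched.

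The classical theorem decomposes $\hat M\oplus \hat N$ into vertex-disjoint alternating paths and even cycles, and every endpoint of a path is a vertex unmatched in $\hat M$ or in $\hat N$. The heart of the argument is a bipartite parity check. Any path with both endpoints on the leader side of $\hat G$ has an even number of edges, while an augmenting path for $\hat M$ with both endpoints unmatched in $\hat M$ is forced by its alternation pattern to have an odd number of edges; these parities are incompatible. An analogous parity check rules out a non-augmenting path joining a leader-copy unmatched in $\hat M$ to a follower unmatched in $\hat N$. Since $\hat N$ saturates all leader-copies, a leader-copy endpoint must be unmatched in $\hat M$, and combined with the parity obstructions above this forces each of the $d(M)$ unmatched leader-copies to lie on a distinct augmenting path for $\hat M$ whose other endpoint is a follower unmatched in $M$.

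Finally, I would un-split these $d(M)$ paths to walks in $G$ starting at a poor leader, alternating edges in $N\setminus M$ and in $M$, and ending at a follower unmatched in $M$. Because the $\hat G$-paths are vertex-disjoint and the followers are never split, the resulting walks are automatically follower-disjoint. The main obstacle I anticipate is that un-splitting can collapse distinct copies of the same leader, so a walk may revisit a leader and fail to be cycle-free. I would resolve this by iterative short-circuiting: whenever $\ell_i=\ell_j$ occurs in a walk $\ell_0,f_1,\ldots,f_k$ with $i<j$, excise $f_{i+1},\ell_{i+1},\ldots,\ell_{j-1},f_j$ and use the edge $(\ell_i,f_{j+1})=(\ell_j,f_{j+1})\in N\setminus M$ to continue past $\ell_i$. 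A short parity check confirms that short-circuiting preserves the alternating $(N\setminus M)/M$ structure and the condition $(\ell_i,f_i)\in M$ for intermediate $i$. Because short-circuiting only removes followers, follower-disjointness is preserved, yielding the required $d(M)$ cycle-free follower-disjoint deficit-decreasing paths.
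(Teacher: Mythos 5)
Your proof is correct, but it follows a genuinely different route from the paper's. The paper argues directly on $G$ in two stages: first a contradiction argument showing that a poor leader $\ell$ with fewer than $d_{\ell}(M)$ follower-disjoint deficit-decreasing paths would force any attempt to satisfy $\ell$ to propagate a deficit to some other leader, contradicting the existence of $N$; then a second contradiction argument (invoking the reachability of $N$ from $M$ and the impossibility of ``solving'' two non-follower-disjoint paths) to select $d(M)$ follower-disjoint paths globally. You instead reduce to the one-to-one case by splitting each leader $\ell$ into $c_{\ell}$ copies, routing $T_{\ell}(M)\cap T_{\ell}(N)$ to the same copy so that $\hat M\oplus\hat N$ projects exactly onto $M\oplus N$, and then run the standard Hopcroft--Karp decomposition into vertex-disjoint alternating paths and even cycles; since $\hat N$ saturates all leader-copies, every leader-copy endpoint is $\hat M$-unexposed, and your parity checks correctly force each of the $d(M)$ exposed copies onto a distinct augmenting path ending at an $M$-unmatched follower. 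The one genuine complication of your route --- that projection can collapse two copies of the same leader and destroy simplicity --- is handled soundly by your short-circuiting step, which preserves the alternating structure, the endpoints, and (since it only deletes followers) follower-disjointness. What each approach buys: the paper's argument stays in the original graph and exposes the structural reason stability forces these paths to exist, but its global selection step leans on an informal reachability claim; your reduction is more mechanical but more self-contained, directly parallels the original one-to-one theorem, and automatically delivers the paper's stronger conclusion that $d_{\ell}(M)$ of the paths start at each poor leader $\ell$.
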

\begin{proof}
See Appendix~\ref{app:karp}.
\end{proof}

We make use of Lemma~\ref{lem:karp} through the following corollary.
\begin{cor}\label{cor:path}
Let $G$ be a network with $m$ followers, admitting a stable matching $N$. Let $M$ be a matching of $G$ with deficit $d(M)\ge\varepsilon m$, for some $\varepsilon>0$. Then, in $M\oplus N$ there exists a deficit-decreasing path relative to $M$ of length at most $2\lfloor 1/\varepsilon \rfloor-1$.
\end{cor}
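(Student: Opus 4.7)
The plan is to derive Corollary~\ref{cor:path} as a direct pigeonhole consequence of Lemma~\ref{lem:karp}. First I would invoke Lemma~\ref{lem:karp} on the matching $M$ (whose deficit is $d(M) \ge \varepsilon m > 0$) to obtain a collection $\mathcal{P}$ of at least $d(M)$ follower-disjoint deficit-decreasing paths relative to $M$, all lying in $M \oplus N$.

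Next I would translate ``length'' into ``number of followers.'' A deficit-decreasing path has the form $\ell_0, f_1, \ell_1, \ldots, f_k$ and so has length $2k-1$ precisely when it contains exactly $k$ followers. Since the paths in $\mathcal{P}$ are pairwise follower-disjoint and all their followers come from the follower set of $G$ (of size $m$), the sum over $P \in \mathcal{P}$ of the number of followers contained in $P$ is at most $m$. Combining this with $|\mathcal{P}| \ge d(M) \ge \varepsilon m$, the average number of followers per path is at most $m / (\varepsilon m) = 1/\varepsilon$. Hence there exists a path $P^\star \in \mathcal{P}$ containing at most $1/\varepsilon$ followers, and since this count is an integer, it is at most $\lfloor 1/\varepsilon \rfloor$. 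The length of $P^\star$ is therefore at most $2\lfloor 1/\varepsilon \rfloor - 1$, as claimed.

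I do not foresee a serious obstacle: all the combinatorial content is already absorbed by Lemma~\ref{lem:karp}, and the corollary is essentially an averaging argument. The only small subtlety is being careful that ``follower-disjoint'' (rather than fully vertex-disjoint, since leaders may be shared) is what is needed to add up follower counts without overcounting; this is exactly what Lemma~\ref{lem:karp} provides.
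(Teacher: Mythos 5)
Your proposal is correct and matches the paper's own proof: both invoke Lemma~\ref{lem:karp} to obtain $\varepsilon m$ follower-disjoint deficit-decreasing paths and then apply a pigeonhole/averaging argument (the paper bounds cumulative length by $2m$ and uses oddness of the length, while you equivalently count followers directly, which is slightly cleaner). No gap.
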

\begin{proof}
By Lemma~\ref{lem:karp}, if $d(M)\ge\varepsilon m$ and $N$ is a stable matching of $G$, then in $M\oplus N$ we can choose $\varepsilon m$ follower-disjoint deficit-decreasing paths relative to $M$, whose cumulative length is at most $2m$ (since they do not share followers and $G$ is bipartite).
Necessarily, one of them has length at most $2\lfloor 1/\varepsilon \rfloor-1$ (note that a deficit-decreasing path has odd length).
\end{proof}

We are now ready to present the proof of Theorem~\ref{thm:approx}.

\subsection{Proof of Theorem~\ref{thm:approx}}
Let $G$ be a network with $m$ followers and which admits a stable matching. Fix $0<\varepsilon<1$.
For $t\ge 0$, $M(t)$ denotes the matching at the beginning of round $t$.
For every $0<x\le 1$, let
$$\tau(x)=\min\Big\{t\ge 0:d(M(t))< xm\Big\}$$
be the first round at whose beginning the deficit is strictly smaller than $xm$.
We are interested in bounding $\tau(\varepsilon)$.

Consider any round $t\ge 0$. By Property~\ref{cla:bound}, $d(M(t))\le m$, and therefore there exists
$0<\varepsilon'\le 1$ such that $d(M(t))=\varepsilon' m$ (we assume $\varepsilon'>0$, since the case of $\varepsilon'=0$ is trivial).
The following lemma bounds the number of rounds $\tau(\varepsilon')-t$ needed for a one-unit reduction of the deficit.
Let $\Delta=\max_{\ell\in L}\vert N_{\ell}\vert$ be the maximum degree of the leaders in $G$.
\begin{lem}
\label{lem:bound}
Let $d(M(t))=\varepsilon'm$ for some $0<\varepsilon'\le 1$. Then
\begin{equation*}
\Pr\Big( \tau(\varepsilon')-t \le \lfloor 1/\varepsilon'\rfloor \Big) \ge \left( \frac{pq}{\Delta}\right)^{\lfloor 1/\varepsilon'\rfloor}.
\end{equation*}
\end{lem}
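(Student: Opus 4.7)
The plan is to translate the short deficit-decreasing path given by Corollary~\ref{cor:path} into an explicit $k$-round execution by the algorithm, with $k \le \lfloor 1/\varepsilon'\rfloor$. Applying Corollary~\ref{cor:path} to $M(t)$ yields a deficit-decreasing path $P = \ell_0, f_1, \ell_1, \ldots, \ell_{k-1}, f_k$ of length $2k-1$. For $j = 0, \ldots, k-1$ I would introduce the event
\[
A_j = \{\text{in round } t+j,\ \ell_j \text{ requests } f_{j+1} \text{ and } f_{j+1} \text{ joins } \ell_j\text{'s team}\},
\]
and verify by induction on $j$ that, on $A_0 \cap \cdots \cap A_{j-1}$, the matching at the start of round $t+j$ still contains the path's tail: $\ell_j$ is poor (having lost $f_j$ to $\ell_{j-1}$ the previous round, or being initially poor when $j=0$), $(\ell_i, f_{i+1}) \in M(t+j)$ for $i<j$, and $(\ell_i, f_i) \in M(t+j)$ for $i>j$, so in particular $f_k$ is still unmatched. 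Consequently, $\bigcap_{j=0}^{k-1} A_j$ realizes precisely the edge-flip along $P$ depicted in Fig.~\ref{fig:ddpath}: $\ell_0$ gains a follower, every intermediate leader swaps $f_i$ for $f_{i+1}$ preserving its size, and $f_k$ becomes matched; the total deficit therefore decreases by exactly one, giving $d(M(t+k)) \le d(M(t))-1 < \varepsilon' m$ and hence $\tau(\varepsilon')-t \le k \le \lfloor 1/\varepsilon'\rfloor$.

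It then remains to show $\Pr\bigl(\bigcap_{j} A_j\bigr) \ge (pq/\Delta)^k \ge (pq/\Delta)^{\lfloor 1/\varepsilon'\rfloor}$, which I plan to do by proving $\Pr(A_j \mid A_0, \ldots, A_{j-1}) \ge pq/\Delta$ for every $j$. Conditioning on the inductive state at the start of round $t+j$, the event $A_j$ is produced by three independent per-round coin flips of the algorithm: (i) $\ell_j$, now poor, chooses to send a request, contributing a factor of $p$; (ii) $\ell_j$ selects $f_{j+1}$ among its at most $\Delta$ eligible neighbors, contributing a factor of at least $1/\Delta$; and (iii) $\ell_j$'s request to $f_{j+1}$ survives the follower's independent rejection step with probability $q$, so that $f_{j+1}$ ends up joining $\ell_j$. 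Multiplying and chaining across $j$ gives the bound, and $pq/\Delta \le 1$ handles the passage from $(pq/\Delta)^k$ to $(pq/\Delta)^{\lfloor 1/\varepsilon'\rfloor}$.

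The main delicate point will be steps (ii)--(iii). When $f_{j+1}$ is matched and $\ell_j$ has unmatched neighbors outside $P$, the leaders' algorithm prefers those unmatched neighbors, so $\ell_j$ would not pick $f_{j+1}$; and when other leaders concurrently request $f_{j+1}$, the uniform choice among non-rejected requests can reduce the effective per-request acceptance probability below $q$. The plan for handling both cases is to observe that every such deviation itself produces a match between a poor leader and an unmatched follower, which alone decreases the deficit by one, so the $pq/\Delta$ per-round lower bound can be maintained on the union of favorable outcomes (the intended dance step or a direct deficit-reducing deflection), and this is what the remainder of the proof ultimately needs.
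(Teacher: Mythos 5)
Your overall strategy---advance the short deficit-decreasing path supplied by Corollary~\ref{cor:path} one leader per round, paying $pq/\Delta$ per round---is the same as the paper's, but two of your steps fail as stated and your proposed patch for the second is wrong. First, the event $A_j$ is too strong: even when $\ell_j$ requests $f_{j+1}$ and the request survives the rejection coin, $f_{j+1}$ selects uniformly among \emph{all} of its active requests, so $\Pr(A_j\mid A_0,\ldots,A_{j-1})$ carries an extra factor equal to one over the number of active requests at $f_{j+1}$ and can be strictly below $pq/\Delta$. Second, and more seriously, your patch asserts that any deviation ``produces a match between a poor leader and an unmatched follower, which alone decreases the deficit by one.'' That is false for the follower-competition deviation when $j+1<k$: the intermediate follower $f_{j+1}$ is \emph{matched} (to $\ell_{j+1}$), so a competing poor leader that wins it merely transfers the hole from itself to $\ell_{j+1}$ and the deficit is unchanged. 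The same issue undermines your induction hypothesis that the tail $(\ell_i,f_i)$, $i>j$, survives on $A_0\cap\cdots\cap A_{j-1}$: nothing in those events prevents an off-path poor leader from poaching some intermediate $f_i$ with $i>j$ and breaking the scripted path.

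The repair, which is what the paper's proof is implicitly doing, is to weaken the per-round event to ``the head of a \emph{shortest} deficit-decreasing path requests the next follower on it, and that follower does not reject the request''---this does have probability at least $pq/\Delta$---and to measure progress by the length of the shortest deficit-decreasing path rather than by survival of the fixed path $P$. On that event the follower joins \emph{some} poor leader; whichever one wins it, $\ell_{j+1}$ loses $f_{j+1}$ and becomes poor, so the tail of $P$ starting at the deepest leader that lost a follower in that round is a deficit-decreasing path shorter by at least two. Minimality of the path is also what guarantees that its head has no unmatched neighbors (so it actually considers the matched $f_{j+1}$ rather than preferring an unmatched follower, your leader-preference deviation) and that the terminal follower $f_k$ has no poor neighbors, hence receives no requests and stays unmatched. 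After at most $\lfloor 1/\varepsilon'\rfloor$ successes either the deficit has already dropped or the shortest path has length one, and one further success matches a poor leader to an unmatched follower, decreasing the deficit. Your handling of the leader-preference deviation itself (an unmatched neighbor of the current head yields a direct deficit decrease with probability at least $pq$) is fine.
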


\begin{proof}
Let $h(t)\ge 1$ be the odd length of the shortest deficit-decreasing path relative to $M(t)$.
By Corollary~\ref{cor:path}, $h(t)\le 2\lfloor 1/\varepsilon' \rfloor-1$.
 We distinguish the cases of $h(t)=1$ and $h(t)\ge 3$.

First consider $h(t)=1$. With probability at least $pq/\Delta$ the deficit decreases by at least one unit during the next round of the algorithm.
Too see this, consider a deficit-decreasing path $\ell,f$. With probability at least $p/\Delta$, $\ell$ attempts to recruit $f$ and, conditional on this event, $f$ considers $\ell$'s proposal with probability $q$, resulting in the lower bound $pq/\Delta$.

Now consider $h(t)\ge 3$, and let $P$ be a shortest deficit-decreasing path of length $h(t)$ ending at an unmatched follower $f$.
By the same argument as above, the length of $P$ decreases by one during the next round with probability at least $pq/\Delta$
(observe that, as long as $h(t)>1$, $f$ remains unmatched during round $t$ since $P$ is a deficit decreasing path of shortest length).

By independence of successive rounds of the algorithm and the bound $h(t)\le 2\lfloor 1/\varepsilon' \rfloor -1$, with probability at least $(pq/\Delta)^{\lfloor 1/\varepsilon' \rfloor}$, a sequence of $\lfloor 1/\varepsilon' \rfloor-1$ rounds reduces the length of $P$ to $1$ and then in one additional round $P$ gets ``solved'' and the deficit decreases by one unit.
\end{proof}

Consider consecutive phases of $\lfloor 1/\varepsilon \rfloor$ rounds each. For phases $i=0,1,2,\ldots$, let $X_i$ be $iid$ Bernoulli random variables with $\Pr(X_i=1)=( pq/\Delta)^{\lfloor 1/\varepsilon\rfloor}$.
By Lemma~\ref{lem:bound}, after $T$ phases (i.e., at the beginning of round $t^*=T\lfloor 1/\varepsilon \rfloor$), the deficit of the matching is upper bounded by
$$
d(M(t^*))<\max\left\{ \varepsilon m, m+1-\sum_{i=1}^{T}X_i \right\},
$$
since by Property~\ref{lem:bound} the matching at the beginning of round $0$ has deficit $d(M(0))\le m$.
By independence of the phases, a Chernoff bound implies that for any $0<\delta\le 1$
$$
\Pr\Big(\sum_{i=1}^{T}X_i<(1-\delta)T( pq/\Delta)^{\lfloor 1/\varepsilon\rfloor}\Big)<e^{-T( pq/\Delta)^{\lfloor 1/\varepsilon\rfloor}\delta^2/2}.
$$

Setting $\delta=\varepsilon$ and $T=cm( \Delta/pq)^{\lfloor 1/\varepsilon\rfloor}$ (where $c$ is a constant to be specified later), the deficit of the matching at the beginning of round $t^*=\lfloor 1/\varepsilon\rfloor cm( \Delta/pq)^{\lfloor 1/\varepsilon\rfloor}$ is upper bounded by
$$
d(M(t^*))<\max\left\{ \varepsilon m, m+1-(1-\varepsilon)cm \right\}
$$
with probability at least $1-e^{-cm\varepsilon^2/2}$. To conclude the proof of the theorem we need that $\varepsilon m\ge m+1-(1-\varepsilon)cm$, which is true for any $c\ge 1+ \frac{1}{m(1-\varepsilon)}$.

\section{Exponential convergence}
\label{sec:stable}
Theorem~\ref{thm:approx} gives a polynomial bound for reaching a $(1-\varepsilon)$-approximate stable matching for any constant $0<\epsilon<1$ and any network.
However, a similar guarantee cannot be derived for the case of a stable matching, as shown in this section through a counterexample.
In particular, we define a sequence of networks of increasing size and maximum degree that diverges with the network size, and show that the number of rounds required to converge from an approximate matching $M$ with $d(M)=1$ to the stable matching  (that is, to reduce the deficit of a single unit) is exponentially large in the network's size with high probability from an overwhelming fraction of the approximate matchings $M$ such that $d(M)=1$.

For $n\ge 1$, let $G_n=(L_n\cup F_n,E_n)$ be the network with $n$ leaders and $n$ followers (i.e., $L_n=\{\ell_1,\ldots,\ell_n\}$ and $F_n=\{f_1,\ldots,f_n\}$), with edges $E_n=\{(\ell_i,f_j):1\le i\le n,j\le i\}$, and team size constraints $c_{\ell}=1$ for all $\ell\in L_n$, see Figure~\ref{fig:bad}. $G_n$ has maximum degree $n$ and a unique stable matching given by $M_n^*=\{(\ell_i,f_i):1\le i\le n\}$.
\begin{figure}[h!]
\centering
\includegraphics[scale=0.4]{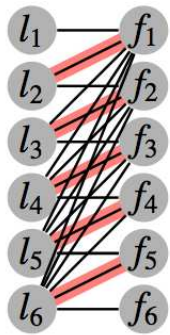}
\caption{The network $G_n$ for $n=6$. The matching $M_n'$ is highlighted.}
\label{fig:bad}
\end{figure}
\begin{thm}\label{thm:bad}
For any matching $M$ of $G_n$, let $\tau^*(M)$ denote the number of rounds to converge  to the perfect matching when starting from $M$.
Then, for any fixed constant $0<\gamma<1$, $\tau^*(M)$ is exponentially large in $\gamma n$ with high probability for a $1-O(n2^{-(1-\gamma)n})$ fraction of all the matchings $M$ such that $d(M)=1$. 
\end{thm}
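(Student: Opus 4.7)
The plan is to exhibit an invariant of the algorithm that traps the process in a reversible Markov chain of size $2^{j-1}$ whose unique ``escape'' state has stationary probability of order $j\,2^{-j}$, and then to lower-bound the hitting time of that state by a standard union bound. I would proceed in four steps.

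\emph{Counting bad matchings.} A matching $M$ with $d(M)=1$ is determined by its unique unmatched leader $\ell_i$ and follower $f_j$, together with a matching of the remaining vertices. Since $\ell_k$'s neighborhood is $\{f_1,\dots,f_k\}$, the initial chain $\ell_1\!-\!f_1,\dots,\ell_{i-1}\!-\!f_{i-1}$ is forced, and a straightforward induction shows that a completion exists iff $i\le j$, with exactly $N(i,j)=2^{j-i-1}$ completions for $i<j$ and $N(i,i)=1$; summing gives $2^n-1$ total matchings with $d(M)=1$. Calling $M$ \emph{bad} when $j-i>\gamma n$, the bound $\sum_{j-i\le\gamma n}N(i,j)\le n\,2^{\gamma n}$ shows that bad matchings form a $1-O(n\,2^{-(1-\gamma)n})$ fraction, matching the theorem's exponent.

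\emph{Invariant and reversible chain.} While $d(M)=1$ the unmatched follower remains $f_j$, because each swap moves a follower from one leader to another without ever freeing her. Hence $d$ can drop from $1$ to $0$ only at the unique state $M_\star$ with $i(M_\star)=j$. On the state space $S=\{M:d(M)=1,\,f_j\text{ unmatched}\}$ (with $|S|=\sum_iN(i,j)=2^{j-1}$), the swap move $M\to M'$ in which $\ell_a$ steals $f_k$ from its owner $\ell_m$ has probability $pq/a$, and its reverse in $M'$ has probability $pq/m$. Detailed balance for transitions within $S\setminus\{M_\star\}$ gives the reversible stationary distribution $\pi(M)\propto i(M)$; computing $\sum_{M\in S}i(M)=\sum_{i=1}^{j-1}i\,2^{j-i-1}+j=2^j-1$ yields $\pi(M_\star)=j/(2^j-1)=O(j\,2^{-j})$.

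\emph{Hitting-time bound.} Since $\tau^\star(M_0)\ge T_{M_\star}$, it remains to lower-bound the hitting time of $M_\star$. The standard union bound
\[
\Pr_{M_0}\!\bigl(T_{M_\star}\le t\bigr)\ \le\ \sum_{s=0}^{t-1}P^s(M_0,M_\star)\ \le\ t\,\pi(M_\star)\,(1+o(1))
\]
holds once $t$ exceeds the mixing time of the chain on $S$. Taking $t=2^{c\gamma n}$ for any constant $c<1$ and using $j>\gamma n$ for a bad starting matching gives $\Pr(T_{M_\star}\le t)=O(n\,2^{-(1-c)\gamma n})=o(1)$, which delivers the $\exp(\Omega(\gamma n))$ lower bound with high probability.

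\emph{Main obstacle.} The first two steps are routine combinatorics and a detailed-balance check. The delicate point is the mixing-time input to the union bound: one needs a polynomial bound on the mixing time of the swap chain on $S$ so that the $(1+o(1))$ factor above is actually controllable. I would try a canonical-paths construction routing flow between any two matchings in $S$ through $O(j)$ elementary swaps, or a direct comparison with a simpler reversible walk. This mixing estimate is the most fragile ingredient and the one most likely to require care in a full write-up.
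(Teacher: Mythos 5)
Your detailed-balance observation is correct and appealing: the dynamics on the states with $d=1$ is indeed reversible with $\pi(M)$ proportional to the index of the poor leader (equivalently, it is a simple random walk on a graph in which the state $\I$ has degree $\min\I$), and your count of $2^n-1$ matchings with $d(M)=1$ is right. But there is a genuine gap: you have identified the wrong invariant, and as a consequence both your stationary-probability estimate and your ``bad fraction'' target the wrong set of matchings. The chain started at $M_0$ is \emph{not} irreducible on your $S=\{M: d(M)=1,\ f_j \text{ unmatched}\}$ of size $2^{j-1}$. Writing $\I(M_0)=\{i_0<\dots<i_{K-1}<i_K\}$ for the indices of leaders not matched horizontally (so $i_0$ is the poor leader and $i_K=j$), the only moves are to add an index strictly below the current minimum or to delete the current minimum; hence no index above the current minimum is ever created, and the communicating class of $M_0$ is $\{A\cup\{i_{K-1},j\}:A\subseteq\{1,\dots,i_{K-1}-1\}\}\cup\{M_\star\}$, of size $2^{h-1}+1$ with $h=i_{K-1}$ (the \emph{second largest} index, not $j$ or $j-i$). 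On that class $\pi(M_\star)=j/(2^{h}+j-1)=\Theta(j2^{-h})$, so the hitting time is exponential in $h$, not in $j-i$. A concrete counterexample to your criterion: $\I(M_0)=\{1,n\}$ has $j-i=n-1>\gamma n$, yet $\ell_1$ must propose to $f_1$, making $\ell_n$ poor, who then grabs the unmatched $f_n$ --- convergence in two rounds. The counting therefore has to be redone for the set $\{M:h(M)\ge\gamma n\}$; the paper shows there are exactly $(n-h)2^{h-1}$ matchings of height $h\ge 1$, which yields the same $O(n2^{-(1-\gamma)n})$ tail but for the correct event.

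The second issue is the one you flag yourself: the inequality $\sum_{s\le t}P^s(M_0,M_\star)\le t\,\pi(M_\star)(1+o(1))$ does not come for free. Reversibility only gives $P^s(M_0,M_\star)\le\pi(M_\star)/\pi(M_0)$, which here is as large as $j$, so without a genuine mixing or escape-time estimate the union bound proves nothing. The paper sidesteps stationarity entirely: it maps the communicating class bijectively onto a recursively defined tree $T^*_{h}$ on which the dynamics is a simple random walk, and proves by induction on nested subtrees that the exit time from $T_i$ is at least $\gamma\cdot 2^{i/(\alpha\log^2 i)}$ with probability $1-1/\log i$ (each visit to the root of $T_i$ escapes with probability $1/i$, and with constant probability the walk instead falls into a deep subtree at least twice). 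If you want to keep your stationary-distribution route, you would need to supply the missing mixing/escape estimate on the $2^{h-1}$-node class; the paper's inductive tree argument is the cleaner way to make the exponential lower bound rigorous.
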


Here we only provide a sketch of the proof, whose details are presented in Appendix~\ref{app:exp}.
To get an understanding of the algorithm's dynamics, consider the matching
$$ M_n'=\{(\ell_i,f_{i-1}): 2\le i\le n\},$$
highlighted in Figure~\ref{fig:bad} for the case of $n=6$.
Observe that $d(M_n')=1$ and, under $M_n'$, $\ell_1$ is poor, and the remaining leaders are stable. According to the algorithm, $\ell_1$ attempts to recruit  $f_1$ (currently in $\ell_2$'s team). If $f_1$ accepts, then $\ell_1$ becomes stable and $\ell_2$ becomes poor (and can in turn attempt to recruit either $f_1$ or $f_2$).
After each round, there exists a unique poor leader until the stable matching is reached. The stable matching is reached when $\ell_{n-1}$ ($\ell_5$ in Figure~\ref{fig:bad}) becomes poor and then successfully recruits $f_{n-1}$ ($f_5$ in Figure~\ref{fig:bad}), and finally $\ell_n$ successfully recruits $f_n$ (recall that leaders prefer unmatched followers).

In general, fix any matching $M$ of $G_n$ such that $d(M)~=~1$. In $M$, there is a single poor leader $\ell_{i_0}$ and a single unmatched follower $f_{i_K}$.
$M$ is associated to a unique deficit-decreasing path $\ell_{i_0},f_{i_0},\ldots,\ell_{i_{K-1}},f_{i_{K-1}},\ell_{i_K},f_{i_K}$.  We define the \emph{height} $h(M)$ of $M$ as follows. If $K\ge 1$ then $h(M)=i_{K-1}$, if $K=0$ then $h(M)=0$.

Starting from $M$, for every $t<\tau(M)$, the matching $M(t)$ at the beginning of round $t$ has deficit $d(M(t))=1$ (by Property~\ref{cla:nonincr}), a single poor leader denoted by $\ell_{i(t)}$, the single unmatched follower $f_{i_K}$ and height $h(M(t))=h(M)=i_{K-1}$.
The stochastic process $\{i(t)\}$ tracking the position of the poor leader $\ell_{i(t)}$ is not a classical random walk on $\{\ell_1,\ldots,\ell_{i_K}\}$ and its transition probabilities at each round depend on the current matching. The time to reach stability is upper bounded by $\min\{t:i(t)=h(M)\}$, that is, the first round in which $\ell_{h(M)}$ becomes poor (since $\ell_{h(M)}$ can then match with $f_{h(M)}$ leaving $\ell_{i_K}$ poor, who would in turn match with the unmatched follower $f_{i_K}$, thus reaching the stable matching).

We prove a one-to-one correspondence between the matchings $M(t)$ reachable from $M$ in which $i(t)\le h(M)$ (note that $d(M(t))=1$ for each of them) and the nodes of a tree whose size is exponentially large in the height $h(M)$. In particular, we can show that the process $\{M(t):t\ge 0,M(0)=M\}$ is equivalent to a classical random walk on the nodes of the tree, and that reaching the matching with $i(t)=h(M)$ corresponds to reaching the root of the tree. A random walk starting at any node of the tree visits the root after a number of steps that is exponentially large in the height $h(M)$ with high probability. Finally, the proof of Theorem~\ref{thm:bad} is completed by arguing that, for any constant $0<\gamma<1$, a $1-O(n2^{-(1-\gamma)n})$ fraction of all matchings $M$ of $G_n$ such that $d(M)=1$ have height $h(M)\ge\gamma n$.

\section{Simulations}
\label{sec:simulations}
In this section, the performance of our algorithm is further evaluated through simulation.
In Figure~\ref{fig:plotlog}, the algorithm's average convergence time on the sequence of networks $G_n$ defined in Section~\ref{sec:stable} is shown (in logarithmic scale).
On the one hand, the thick solid line suggests that the average number of rounds to reach a $0.9$-approximate stable matching is upper bounded by a polynomial of small degree, consistently with Theorem~\ref{thm:approx}. On the other hand, convergence to the stable matching requires an average number of rounds that grows exponentially in $n$ (thin solid line), as predicted by Theorem~\ref{thm:bad}.
Moreover, the dotted line represents the average time after which all followers become matched, that grows slowly with $n$.

Figure~\ref{fig:plotAppr} shows the algorithm's performance in reaching successively finer approximations of the best matching on random networks $G(n,m,\rho)$. Here, $G(n,m,\rho)$ refers to a random bipartite network with $n$ leaders and $m$ followers, in which each edge exists independently of the others with probability $\rho$ (we fixed $\rho=0.04$), and with constraint $c_{\ell}=\min\{m/n,\vert N_{\ell}\vert\}$ for each leader $\ell$ . For each of the $(n,m)$ pairs that we considered, $20$ random $G(n,m,\rho)$ were generated, and the algorithm was run $20$ times on each.
We observe that, consistently with Theorem~\ref{thm:approx}, $\tau(\varepsilon)$ increases both when $\varepsilon$ decreases (i.e., when a finer approximation is desired) and when the number $m$ of followers increases. The plot visually suggests that a good solution is reached quickly, while most of the time is spent in the attempt of improving it to the best solution.
\begin{figure}
\centering
\psfrag{A}[posn=c][psposn=c]{\small{$n$ }}
\psfrag{B}[posn=c][psposn=c]{\small{Rounds (log-scale)}}
\psfrag{approxmatchingxxxxxxxxxxxxxx}[posn=c][psposn=c]{\small{$0.9$-approximation}}
\psfrag{stablematchingxxxxxxxxxxxxxxxxxx}[posn=c][psposn=c]{\small{Stable matching}}
\psfrag{followersmatchedxxxxxxxxxxxx}[posn=c][psposn=c]{\small{Followers matched}}
\includegraphics[scale=0.5]{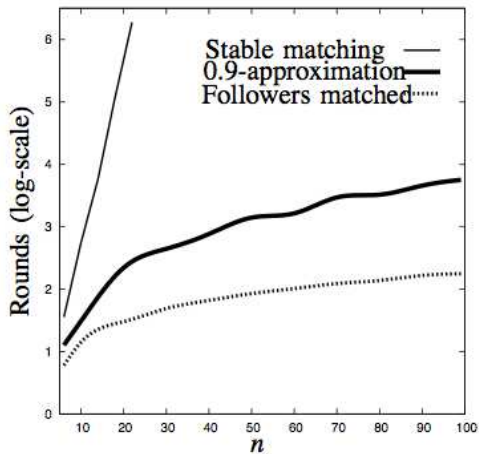}
\caption{Algorithm's convergence time on the sequence of networks $G_n$.}
\label{fig:plotlog}
\end{figure}

\begin{figure}
\centering
\psfrag{A}[posn=c][psposn=c]{\small{$1-\varepsilon$ }}
\psfrag{B}[posn=c][psposn=c]{\small{Rounds}}
\psfrag{n100m200xxxxxxxxxxxxxxxxxx}[posn=c][psposn=c]{\small{$n=100,m=200$}}
\psfrag{n100m300xxxxxxxxxxxxxxxxxx}[posn=c][psposn=c]{\small{$n=100,m=300$}}
\psfrag{n150m450xxxxxxxxxxxxxxxxxx}[posn=c][psposn=c]{\small{$n=150,m=450$}}
\psfrag{n200m600xxxxxxxxxxxxxxxxxx}[posn=c][psposn=c]{\small{$n=200,m=600$}}
\includegraphics[scale=0.5]{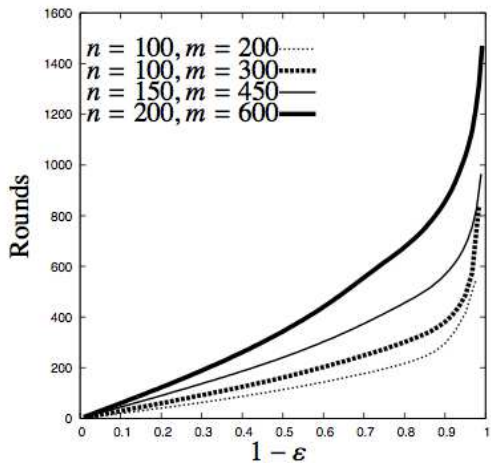}
\caption{Algorithm's average time to reach a $(1-\varepsilon)$-approximate best matching on random bipartite networks $G(n,m,\rho)$, for $\rho=0.04$.}
\label{fig:plotAppr}
\end{figure}

\section{Discussion}
The distributed algorithm we proposed, in which leaders and followers act according to simple local rules, is computationally tractable and allows us to derive performance guarantees in the form of theorems.
Despite its simplicity, the algorithm is shown to reach an arbitrarily close approximation of a stable matching (or of a best matching) in polynomial time in any network. However, in general there can be an exponential gap between reaching an approximate solution and a stable solution.

In the proposed algorithm, leaders do not communicate between each other, and only act in response to their own status and the status of their neighborhoods.
The only collaboration between them consists in the fact that the leaders whose size constraints are satisfied do not attempt to recruit additional matched followers, and this is justified since recruiting more followers might be costly.
How communication between leaders affects performance is an open question, as well as determining what amounts of communication and complexity are necessary to remove the exponential gap in the case of unbounded degree networks.

Finally, in Section~\ref{sec:stable}, we defined a sequence of networks in which the maximum degree of the leaders scales linearly with the network size.
It would be interesting to understand whether a counterexample in which the maximum degree scales more slowly (e.g., logarithmically in the network size) could be derived.

\section*{Appendix}
\setcounter{section}{0}
\renewcommand{\thesection}{\Alph{section}}

\section{Proof of Lemma~\ref{lem:karp}}
\label{app:karp}
Given the matching $M$ and the stable matching $N$, for brevity we write deficit-decreasing path instead of deficit-decreasing path in $M\oplus N$ relative to $M$. Similarly, by telling that leader $\ell$ and follower $f$ are matched we mean that $(\ell,f)\in M$, unless otherwise specified.

We prove a stronger claim than the one stated in the lemma, proceeding as follows. First, we show that for each leader $\ell$ with deficit $d_{\ell}(M)>0$ there are at least $d_{\ell}(M)$ follower-disjoint deficit-decreasing paths starting at $\ell$. Then, we argue that $d(M)$ follower-disjoint  deficit-decreasing paths can be chosen, $d_{\ell}(M)$ of which start at each leader $\ell$ with deficit $d_{\ell}(M)>0$.

Consider a leader $\ell$ with $d_{\ell}(M)>0$. Assume by contradiction that there are strictly less then $d_{\ell}(M)$ follower-disjoint deficit-decreasing paths starting at $\ell$, and refer to Fig.~\ref{fig:poorleader} for a schematic representation.

Since $\ell$ has a team size constraint $c_{\ell}>0$, there are exactly $c_{\ell}-d_{\ell}(M)$ followers that are matched to $\ell$. Observe that no follower matched to $\ell$ can be the first follower of a deficit-decreasing path starting at $\ell$, since a deficit-decreasing path starts with an edge in $N\backslash M$.

Since $G$ admits a stable matching, the neighborhood $N_{\ell}$ of $\ell$ has size $\vert N_{\ell}\vert\ge c_{\ell}$. Therefore, there are are $k\ge d_{\ell}(M)$ followers in $N_{\ell}$ that are not matched to $\ell$.
Assume that $h<d_{\ell}(M)$ of the followers in $N_{\ell}$ are the first followers of $h$ follower-disjoint deficit-decreasing paths starting at $\ell$ (these paths are denoted by $P_1,\ldots,P_h$ in Figure \ref{fig:poorleader}).
Denote the remaining $k-h>0$ followers by $f_1,\ldots,f_{k-h}$, and assume by contradiction that none among  them is the first follower of a deficit-decreasing path starting at $\ell$ (this is equivalent to assuming that there are strictly less than $d_{\ell}(M)$ follower-disjoint deficit-decreasing paths starting at $\ell$).
\begin{figure}
\centering
\psfrag{}[posn=c][psposn=c]{}
\includegraphics[scale=0.50]{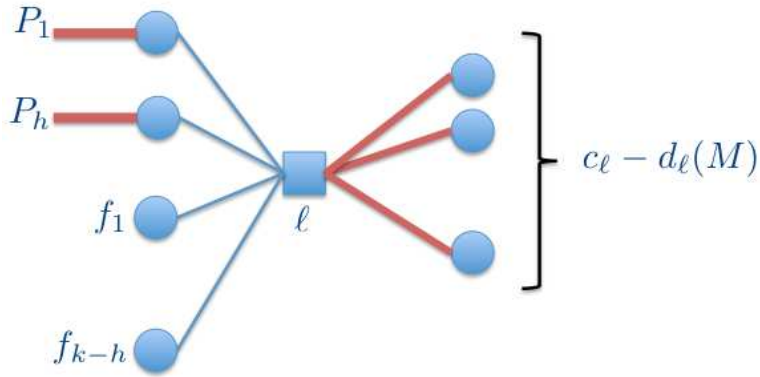}
\caption{A leader $\ell$ with constraint $c_{\ell}$, degree $\vert N_{\ell}\vert\ge c_{\ell}$ and deficit $d_{\ell}(M)$. Matched edges are highlighted. $\ell$ is matched to exactly $c_{\ell}-d_{\ell}(M)$ followers (depicted on the right). Among the other $k\ge d_{\ell}(M)$ followers in $N_{\ell}$, $h<k$ of them are the first followers on $h$ follower-disjoint deficit-decreasing paths starting at $\ell$ (these paths are denoted by $P_1,\ldots,P_h$), and none of the remaining $k-h$ (denoted by $f_1,\ldots,f_{k-h}$ ) is the first follower of a deficit-decreasing paths starting at $\ell$.}
\label{fig:poorleader}
\end{figure}

Observe that, in order to become stable, $\ell$ needs to match with at least one additional follower among $\{f_1,\ldots,f_{k-h}\}$.
We show that, under the assumption above, a one-unit reduction in the deficit of $\ell$ would eventually result in a one-unit increase of the deficit of another leader, implying that $G$ does not admit a stable matching, generating a contradiction.

Consider any follower $f'\in\{f_1,\ldots,f_{k-h}\}$, and observe that $f'$ is matched in $M$ since otherwise $\ell f'$ would be a deficit-decreasing path starting at $\ell$.
Let $\ell'$ be the leader such that $(\ell',f')\in M$, and observe that if $\ell'$ is matched to all followers in $N_{\ell'}$ then $\ell$ cannot match to $f'$ without causing a one-unit increase of the deficit of $\ell'$.
Therefore assume that in $N_{\ell'}$ there is a follower $f''$ such that $(\ell'',f'')\in M$ for some leader $\ell''\neq\ell'$ ($f''$ is matched in $M$ since otherwise $\ell,f',\ell',f''$ would be a deficit-decreasing path starting at $\ell$, see Fig.~\ref{fig:nopath}).  In the following two cases $\ell$ cannot match to $f'$ without eventually increasing the deficit of another leader.
\begin{itemize}
\item[(i)] $\ell''=\ell$. In this case $\ell,f',\ell',f'',\ell$ is a cycle, and if $\ell$ matches to $f'$ then the deficit of a leader in the cycle must increase of one unit.
\item[(ii)] $\ell''\neq\ell$ and $\ell''$ is matched to all followers in $N_{\ell''}$ other than $f'$. In this case if $\ell$ matches to $f'$ then the deficit of a leader on the path $\ell,f',\ell',f'',\ell''$ must eventually increase by a unit.
\end{itemize}
Therefore assume that in $N_{\ell''}$ there is a follower $f'''$ such that $(\ell''',f''')\in M$ for some leader $\ell'''\neq\ell''$ (again, $f'''$ is matched in $M$ since otherwise $\ell,f',\ell',f'',\ell'',f'''$ would be a deficit-decreasing path).
Again, $\ell$ cannot match to $f'$ without eventually increasing the deficit of another leader if either $\ell'''=\ell$ or $\ell'''=\ell'$ (each similar to the case (i) above), or if $\ell'''$ is matched to all followers in $N_{\ell''}$ other than $f',f''$ (similar to the case (ii) above).

By iteration, it follows that $\ell$ cannot match to any follower $f'\in\{f_1,\ldots,f_{k-h}\}$ without eventually increasing the deficit of another leader, in contradiction with the existence of the stable matching $N$. Hence, there are at least $d_{\ell}(M)$ follower-disjoint deficit-decreasing paths starting at $\ell$.
\begin{figure}
\centering
\psfrag{}[posn=c][psposn=c]{}
\includegraphics[scale=0.50]{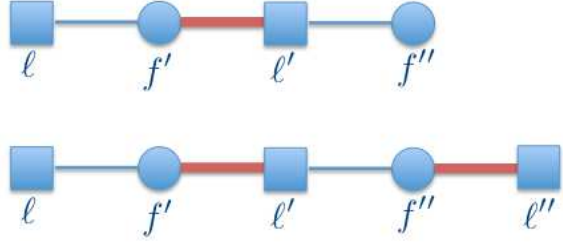}
\caption{If  $f''$ is not matched then $\ell,f',\ell',f''$ would be a deficit-decreasing path (shown at the top of the figure, in which matching edges are highlighted), contradicting the assumption that no follower in $\{f_1,\ldots,f_{k-h}\}$ can be the first follower of a deficit-decreasing path starting at $\ell$. Therefore, $f''$ is matched to a leader $\ell''$ (the bottom of the figure represents the case of $\ell\neq\ell''$).}
\label{fig:nopath}
\end{figure}

To complete the proof of the lemma, we show that we can choose $d(M)$ follower-disjoint deficit-decreasing paths, $d_{\ell}(M)$ of which start at each leader $\ell$ with $d_{\ell}(M)>0$.

We proceed by contradiction, and make the following assumption. Let $\mathcal{P}$ be any set of $d(M)$ deficit-decreasing paths, $d_{\ell}(M)$ of which start at each leader $\ell$ with $d_{\ell}(M)>0$ (denote by $\mathcal{P}_{\ell}$ the elements of $\mathcal{P}$ starting at $\ell$); then, there are two leaders $\ell$, $\ell'$ such that two paths $P\in\mathcal{P}_{\ell}$, $P'\in\mathcal{P}_{\ell'}$ are not follower-disjoint.
In order to reach the stable matching $N$ starting from $M$, a set of $d(M)$ deficit-decreasing paths must be solved. However, if $P$ is solved (by ``flipping'' matched edges into unmatched edges, and vice versa) then $P'$ is not solved, and if $P'$ is solved then $P$ is not solved (see Figures \ref{fig:intersectingdd1} and \ref{fig:intersectingdd2} for a schematic representation). If follows that $N$ cannot be reached from $M$ by solving the $d(M)$ deficit-decreasing paths in $\mathcal{P}$.

The last argument holds  for any choice of $\mathcal{P}$, and this generates a contradiction on the reachability of $N$ starting from $M$ (observe that $N$ can be reached from $M$ in finite time, e.g. by a cat-and-mouse argument on the space of all the matchings of $G$). 
Hence, we can choose $d(M)$ follower-disjoint deficit-decreasing paths, $d_{\ell}(M)$ of which start at each leader $\ell$ with $d_{\ell}(M)>0$, and the lemma is proven.
\begin{figure}
\centering
\psfrag{}[posn=c][psposn=c]{}
\includegraphics[scale=0.50]{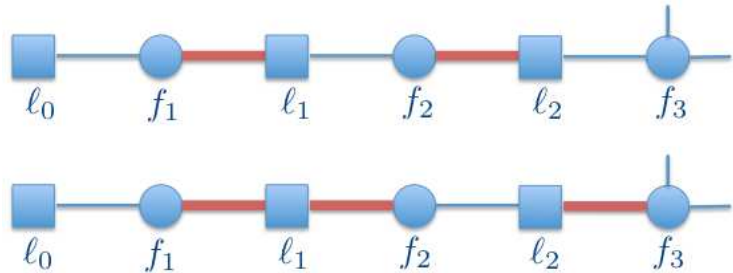}
\caption{Given the matching at the top of the figure (matching edges are highlighted), assume that both $\ell_0$ and $\ell_1$ are poor, and that $f_3$ is unmatched. The deficit-decreasing paths $P=\ell_0,f_1,\ell_1,f_2,\ell_2,f_3$ and $P'=\ell_0,f_1,\ell_1,f_2$ are not follower-disjoint. If $P'$ is solved (shown at the bottom of the figure), then $P$ is not  solved, and vice versa.}
\label{fig:intersectingdd1}
\end{figure}

\begin{figure}
\centering
\psfrag{}[posn=c][psposn=c]{}
\includegraphics[scale=0.50]{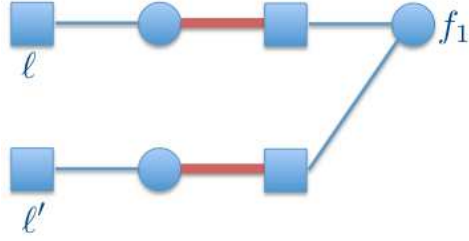}
\caption{If, under the matching highlighted in the figure, both $\ell$ and $\ell'$ are poor and $f_1$ is unmatched then there are two deficit-decreasing paths that are not follower-disjoint (one starting at $\ell$ ad ending at $f_1$, the other starting at $\ell'$ ad ending at $f_1$). If one of them is solved then the other is not solved, and vice versa.}
\label{fig:intersectingdd2}
\end{figure}

\section{Proof of Theorem~\ref{thm:bad}}
\label{app:exp}
Let $\M_n$ be the set of all the matchings of $G_n$ such that $d(M)=1$. We proceed as follows.
First, we show that each $M\in\M_n$ is uniquely identified by the set of the leaders that are not matched with ``horizontal'' edges (that is, leaders $\ell_i$ such that $(\ell_i,f_i)\notin M$).
Second, we define trees $T^*_m$, $m\ge 1$ such that a random walk on $T^*_m$ starting at any node different than the root hits the root after a number of steps that is exponentially large in $m$ with high probability.
Third, for each matching $M\in\M_n$ we define a quantity $h(M)$ that we call the \emph{height} of $M$ and we argue that, when initialized at $M$, the algorithm's dynamics is equivalent to a random walk on the tree $T^*_{h(M)}$ and reaching the stable matching of $G_n$ corresponds to reaching the root of $T^*_{h(M)}$ (and therefore it requires a number of rounds that is exponentially large in $h(M)$ with high probability).
Finally, by a counting argument, we show that for any constant $0<\gamma<1$ a $1-O(n2^{-(1-\gamma)n})$ fraction of all the matchings in $\M_n$ have height at least $\gamma n$, completing the proof of the theorem.

\subsection{Properties of the matchings in $\M_n$.}
Matchings in $\M_n$ enjoy the following structural properties.
\begin{lem}
\label{lem:Mprop}
Let $M\in\M_n$. The following properties hold.
\begin{itemize}
\item[(1)] There are a single poor leader $\ell_{i^*(M)}$ and a single unmatched follower $\ell_{j^*(M)}$ in $M$.
\item[(2)] $1\le i^*(M)\le j^*(M)\le n$.
\item[(3)] $(\ell_k,f_k)\in M$ for all $k<i^*(M)$ and all $k> j^*(M)$.
\item[(4)] Let $\I(M)=\{j_0,j_1,\ldots,j_K\}$ be the sorted set of indexes $j$ such that $(\ell_j,f_j)\notin M$. Then
  \begin{itemize}
  \item[(a)] $j_1=i^*(M)$ and $j_K=j^*(M)$.
  \item[(b)] $(\ell_{j_{k+1}},f_{j_k})\in M$ for all $k\in\{0,\ldots,K-1\}$.
  \end{itemize}
\end{itemize}
\end{lem}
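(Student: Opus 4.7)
The plan is to verify the four properties in order, each time exploiting the triangular edge structure of $G_n$ (namely, $(\ell_i, f_j) \in E_n$ iff $j \le i$) together with the unit constraint $c_\ell = 1$. Property (1) is immediate: since $c_\ell = 1$, every leader has $d_\ell(M) \in \{0,1\}$, so $d(M) = 1$ forces exactly one poor leader $\ell_{i^*(M)}$, and the remaining $n-1$ leaders are matched to $n-1$ distinct followers, leaving exactly one unmatched follower $f_{j^*(M)}$. For property (2), I would argue by contradiction: if $i^*(M) > j^*(M)$, then all leaders $\ell_1, \ldots, \ell_{j^*(M)}$ are matched (since they all have index strictly less than $i^*(M)$), and each such $\ell_k$ has neighborhood $N_{\ell_k} = \{f_1, \ldots, f_k\} \subseteq \{f_1, \ldots, f_{j^*(M)}\}$. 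These $j^*(M)$ leaders would then match injectively into the $j^*(M)$-element set $\{f_1, \ldots, f_{j^*(M)}\}$, forcing $f_{j^*(M)}$ to be matched and contradicting its role as the unique unmatched follower.

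For property (3), the case $k < i^*(M)$ follows by forward induction on $k$: the leader $\ell_1$ is the neighbor only of $f_1$, so $(\ell_1, f_1) \in M$ whenever $\ell_1$ is matched; once $(\ell_j, f_j) \in M$ has been established for all $j < k$, the leader $\ell_k$ (matched because $k < i^*(M)$) sees all followers $f_1, \ldots, f_{k-1}$ already taken and must therefore match to $f_k$. The case $k > j^*(M)$ is settled by a symmetric reverse induction starting from $f_n$, which is a neighbor only of $\ell_n$. Property (4a) is then a direct consequence of (3): any index in $\I(M)$ must lie in $\{i^*(M), \ldots, j^*(M)\}$, and both endpoints belong to $\I(M)$ since $\ell_{i^*(M)}$ and $f_{j^*(M)}$ are unmatched. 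Hence, once the elements of $\I(M)$ are listed in increasing order, the smallest equals $i^*(M)$ and the largest equals $j^*(M)$ (I interpret ``$j_1 = i^*(M)$'' as a typographical slip for ``$j_0 = i^*(M)$'', consistent with the statement of (4b) below).

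Property (4b) is the main step and the principal obstacle. The strategy is to isolate the ``non-diagonal'' part of $M$ as a bipartite matching between $\{\ell_{j_1}, \ldots, \ell_{j_K}\}$ and $\{f_{j_0}, \ldots, f_{j_{K-1}}\}$ and then to show it is rigid. By (2) and (3), the restriction of $M$ to the middle leaders $\{\ell_{i^*+1}, \ldots, \ell_{j^*}\}$ and middle followers $\{f_{i^*}, \ldots, f_{j^*-1}\}$ is a perfect matching between these two sets of common size $j^*-i^*$. Removing the diagonal pairs $(\ell_k, f_k)$ for $k \in \{i^*+1, \ldots, j^*-1\} \setminus \I(M)$ leaves exactly a perfect matching between $\{\ell_{j_1}, \ldots, \ell_{j_K}\}$ and $\{f_{j_0}, \ldots, f_{j_{K-1}}\}$. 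In this reduced graph the neighborhood of $\ell_{j_k}$ is $\{f_{j_b} : j_b \le j_k,\ b \ne k\} = \{f_{j_0}, \ldots, f_{j_{k-1}}\}$ (the exclusion $b = k$ holding because $j_k \in \I(M)$ means $(\ell_{j_k}, f_{j_k}) \notin M$). Applying Hall's theorem to the sub-family $\{\ell_{j_1}, \ldots, \ell_{j_k}\}$, whose joint neighborhood has exactly $k$ elements $\{f_{j_0}, \ldots, f_{j_{k-1}}\}$, forces that sub-family to be matched bijectively onto those $k$ followers; an easy induction on $k$ then pins down $(\ell_{j_{k+1}}, f_{j_k}) \in M$ for every $k = 0, \ldots, K-1$. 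The hardest point is recognizing that the staircase shape of $G_n$ collapses this Hall counting argument into uniqueness, so that the entire matching $M$ becomes indexable by the single subset $\I(M)$, which is precisely the structural fact the rest of the proof of Theorem~\ref{thm:bad} will exploit.
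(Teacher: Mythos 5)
Your proof is correct and follows essentially the same route as the paper's: the same pigeonhole argument for (2), the same forward/backward inductions anchored at $\ell_1$ and $f_n$ for (3), and for (4b) your Hall-type counting on the prefix sets $\{\ell_{j_1},\ldots,\ell_{j_k}\}$ is just a repackaging of the paper's direct induction showing that $f_{j_k}$ is the only follower left available to $\ell_{j_{k+1}}$ once the diagonal pairs and the previously forced pairs are accounted for. You also correctly flag the typographical slip ``$j_1=i^*(M)$'' for ``$j_0=i^*(M)$'' in (4a), which the paper's own proof confirms.
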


\begin{proof}
Property (1). Since $d(M)=\sum_{\ell\in L}d_{\ell}(M)=1$, there is a single poor leader $\ell_{i^*(M)}$ in $M$. Since $c_{\ell}=1$ for all $\ell\in L$, each leader $\ell\neq \ell_{i^*(M)}$ is matched to a single follower. It follows that there is a unique unmatched follower $f_{j^*(M)}$.

Property (2). Suppose by contradiction that $i^*(M)>j^*(M)$. Since $N_{\ell_{j^*(M)}}=\{f_1,\ldots,f_{j^*(M)}\}$ and $f_{j^*(M)}$ is unmatched, leader $\ell_{j^*(M)}$ is matched to one of the followers in $\{f_1,\ldots,f_{j^*(M)-1}\}$.
Hence, the $j^*(M)-1$ leaders $\ell_1,\ldots,\ell_{j^*(M)-1}$ are matched to at most $j^*(M)-2$ out of the $j^*(M)-1$ followers  $f_1,\ldots,f_{j^*(M)-1}$, and one of them is necessarily poor, contradicting Property (1). Therefore, $i^*(M)\le j^*(M)$.

Property (3). We proceed by induction. If $i^*(M)>1$, then $(\ell_1,f_1)\in M$ since $N_{\ell_1}=\{f_1\}$ and $\ell_1$ is matched with a follower. Assume that if $i^*(M)>j$ then $(\ell_k,f_k)\in M$ for all $k\le j$. If $i^*(M)>j+1$, then, by the inductive assumption, $\ell_{j+1}$ can only be matched to $f_{j+1}$ since $N_{\ell_{j+1}}=\{f_1,\ldots,f_{j+1}\}$. This shows that $(\ell_k,f_k)\in M$ for all $k<i^*(M)$.
If $j^*(M)< n$ then $(\ell_n,f_n)\in M$ since $f_n$ is matched and $\ell_n$ is the only leader connected to $f_n$. Assume by induction that if $j^*(M)<j$ then $(\ell_k,f_k)\in M$ for all $k\ge j$. If $j^*(M)<j-1$, then, by the inductive assumption, $f_{j-1}$ can only be matched to $\ell_{j-1}$ since $f_{j-1}$ is adjacent to $\ell_{j-1},\ldots,\ell_n$. This shows that $(\ell_k,f_k)\in M$ for all $k>j^*(M)$.

Property (4). If $K=0$ then $M=\{(\ell_i,f_i):i\neq i^*(M)\}$, $j^*(M)=i^*(M)$, and properties (4a) and (4b) trivially hold.
Now consider $K\ge 1$. Let $\I(M)=\{j_0,j_1,\ldots,j_K\}$ be the sorted set of indexes $j$ such that $(\ell_j,f_j)\notin M$.
By property (3), we have that $j_0=i^*(M)$ and $j_K=j^*(M)$, therefore property (4a) follows. Hence, $(\ell_{j_2},f_{j_1})\in M$ 
since $(\ell_k,f_k)\in M$ for all $k\in\{j_1+1,\ldots,j_2-1\}$ by definition of $\I(M)$, and $N_{\ell_{j_2}}=\{f_1,\ldots,f_{j_2}\}$.
Property (4b) follows by induction.
\end{proof}

Lemma~\ref{lem:Mprop} states that non-horizontal matching edges do not intersect.
In particular, given a matching $M\in\M_n$, the set $\I(M)$ represents the set of (the sorted indexes of) the leaders that are not matched with horizontal edges (see Figure~\ref{fig:set_i} for an example), $\ell_{i^*(M)}$ for $i^*(M)=\min\I(M)$ is the unique unmatched leader, and $\ell_{j^*(M)}$ for $j^*(M)=\max\I(M)$ is the unique unmatched follower. Recall that $M_n^*=\{(\ell_k,f_k):1\le k\le n\}$ is the unique stable matching of $G_n$, and let $\I(M_n^*)=\emptyset$.
Lemma~\ref{lem:Mprop} implies that every matching $M\in\M_n\cup\{M^*n\}$ is uniquely identified by the set $\I(M)$. In particular, the following result holds.
\begin{figure}[h!]
\centering
\includegraphics[scale=0.4]{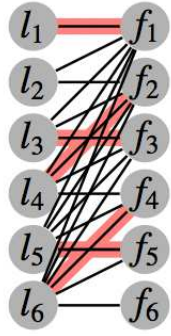}
\caption{An example of a matching $M$ of $G_6$ with $d(M)=1$. $M$ is uniquely determined by the set $\I(M)=\{2,4,6\}$, that encodes the following: $\ell_2$ is not matched, $\ell_4$ is matched with $f_2$, $\ell_6$ is matched with $f_4$, $f_6$ is not matched. Also note that $P(M)=\ell_2,f_2,\ell_4,f_4,\ell_6,f_6$ is the unique deficit-decreasing path relative to $M$.}
\label{fig:set_i}
\end{figure}
\begin{lem}
\label{lem:M121}
Consider the mapping $\I(\cdot)$ from $\M_n\cup\{M^*_n\}$ to $\s = \big\{A:A\subseteq\{1,\ldots,n\}\big\}$ defined by $M\mapsto\I(M)$. Then $\I(\cdot)$ is a bijection.

\end{lem}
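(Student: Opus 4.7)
The plan is to leverage Lemma~\ref{lem:Mprop} heavily: it already forces the structure of every matching $M\in\M_n$ to be completely determined by which leaders fail to be matched horizontally (i.e., by $\I(M)$), which will give injectivity almost for free. Then I would establish surjectivity by explicitly building, for any subset $A\subseteq\{1,\ldots,n\}$, a matching of $G_n$ whose image under $\I(\cdot)$ equals $A$.

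For injectivity, I would argue as follows. The stable matching $M_n^*$ is the unique preimage of $\emptyset$ since $\I(M)=\emptyset$ forces $(\ell_k,f_k)\in M$ for every $k$. For $M\in\M_n$, write $\I(M)=\{j_0<j_1<\cdots<j_K\}$ as in Lemma~\ref{lem:Mprop}. Property~(3) of that lemma pins down every horizontal matching edge $(\ell_k,f_k)$ for $k\notin\I(M)$; Property~(4b) pins down every non-horizontal matching edge as $(\ell_{j_{k+1}},f_{j_k})$ for $0\le k\le K-1$; and Properties~(1)--(4a) guarantee that $\ell_{j_0}$ is the unique unmatched leader and $f_{j_K}$ the unique unmatched follower. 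Hence $M$ is fully reconstructed from $\I(M)$, so two distinct matchings in $\M_n\cup\{M_n^*\}$ cannot share the same image.

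For surjectivity, I would fix $A\subseteq\{1,\ldots,n\}$ and construct the candidate matching. If $A=\emptyset$, take $M=M_n^*$. Otherwise, write $A=\{j_0<j_1<\cdots<j_K\}$ and define
\[
M \;=\; \{(\ell_k,f_k):k\notin A\}\;\cup\;\{(\ell_{j_{k+1}},f_{j_k}):0\le k\le K-1\}.
\]
The main verification is that $M$ is actually a matching in $G_n$. Each edge $(\ell_{j_{k+1}},f_{j_k})$ belongs to $E_n$ because $j_k<j_{k+1}$ and $E_n=\{(\ell_i,f_j):j\le i\}$; the horizontal edges are in $E_n$ trivially. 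No follower appears in two pairs: the followers used are $\{f_k:k\notin A\}\cup\{f_{j_k}:0\le k\le K-1\}$, which are pairwise distinct since $j_K$ is omitted from the second set. Similarly no leader is used twice, because the leaders used are $\{\ell_k:k\notin A\}\cup\{\ell_{j_{k+1}}:0\le k\le K-1\}$ and $j_0$ is omitted. Thus $\ell_{j_0}$ is the only unmatched leader and $f_{j_K}$ the only unmatched follower, giving $d(M)=1$ (so $M\in\M_n$) and exactly $\I(M)=A$.

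The only slightly subtle point—really the whole content of the argument—is checking that the non-horizontal edges $(\ell_{j_{k+1}},f_{j_k})$ exist in $G_n$; this is where the ordering of the indices in $A$ matters and where the ``upper-triangular'' structure of $G_n$ is used. Once this is in place, injectivity and surjectivity combine to give the bijection.
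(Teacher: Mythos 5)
Your proof is correct and follows essentially the same route as the paper: injectivity via the structural properties of Lemma~\ref{lem:Mprop} that let $M$ be reconstructed from $\I(M)$, and surjectivity via the identical explicit construction of $M$ from a given subset $A$. You actually supply more detail than the paper does (verifying the constructed edges lie in $E_n$ and that no leader or follower is reused), which only strengthens the argument.
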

\begin{proof}
The stable matching $M_n^*$ is associated to $\I(M_n^*)=\emptyset$.
The mapping $\I(\cdot)$ is injective since if $M,M'\in\M_n$ and $M\neq M'$ then $\I(M)\neq\I(M')$.
To see that $\I(\cdot)$ is surjective, fix $K\le n-1$ and $A=\{i_0,i_1,\ldots,i_K\}\in\s$ such that $1\le i_0<i_1<\ldots<i_K\le n$. The matching $M\in\M_n$ such that $\I(M)=A$ is given by
$$
M=\big\{(\ell_{i_{k+1}},f_{i_k}):0\le k\le K-1\big\}\cup\big\{(\ell_k,f_k):k\notin A\big\}\in\M_n.
$$
\end{proof}

\begin{rem}
Lemma~\ref{lem:Mprop} and Lemma~\ref{lem:M121} imply that every matching $M\in\M_n$, $\I(M)=\{i_0,\ldots,i_K\}$, is associated to a unique deficit-decreasing path in $M\oplus M_n^*$ relative to $M$, given by
$$
P(M)=\ell_{i_0},f_{i_0},\ell_{i_1},f_{i_1},\ldots,\ell_{i_K},f_{i_K}.
$$
Too see this, observe that $M\backslash M_n^*$ is given by the non-horizontal edges in $M$, while $M_n^*\backslash M$ is given by the horizontal edges that are not in $M$. Therefore, by Lemma~\ref{lem:Mprop},
\begin{align*}
M\backslash M_n^* & = \big\{(\ell_{i_1},f_{i_0}),(\ell_{i_2},f_{i_1}),\ldots,(\ell_{i_K},f_{i_{K-1}})\big\},\\
M_n^*\backslash M & = \big\{(\ell_{i_0},f_{i_0}),(\ell_{i_1},f_{i_1}),\ldots,(\ell_{i_{K}},f_{i_{K}})\big\},
\end{align*}
and the set of edges in $P(M)$ is equal to $M\oplus M_n^*$.
The uniqueness of $P(M)$ follows since $\I(M)$ is unique by Lemma~\ref{lem:M121} and there is no other way to connect the poor leader $\ell_{i_0}$ and the unmatched follower $f_{i_K}$ with a path.
This suggests that, given a matching $M\in\M_n$, the unique deficit-decreasing path $P(M)$ must be ``solved'' in order to reach the stable matching of $G_n$.
\end{rem}

\subsection{The tree $T^*_m$}
\begin{defn}
Let $T_1$ be a labeled
rooted tree with a singleton node with label $1$.
Inductively, for $i\leq 2$, let $T_i$ be the labeled
rooted tree whose root is labeled with
$i$ and its $i-1$ children are the roots of copies of $T_1,\ldots,T_{i-1}$. We define $T^*_m$ to be the tree with a root with label $m+1$ whose only child
is the root of a copy of $T_m$ (see Figure~\ref{fig:tree} for a visual representation).
Let $r^*$ denote the root of $T^*_m$.
\end{defn}
\begin{figure}[h!]
\centering
\includegraphics[scale=0.4]{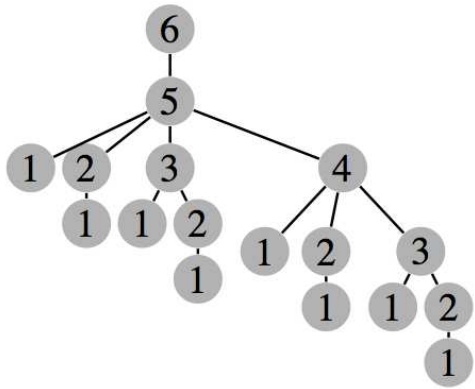}
\caption{The three $T^*_m$ for $m=5$.}
\label{fig:tree}
\end{figure}

We show that the hitting time of $r^*$ for a random walk on $T^*_m$ starting at any node $u\neq r^*$ is exponential in $m$ with high probability.
For a node $u\neq r^*$, we call the edge that connects $u$ to its parent
 $u$'s {\em exit} edge.
For any subtree $T_i \subset T^*_m$, let $Z_i$ be the random variable denoting the number of steps that
it takes for a walk starting at the root of $T_i$ to exit $T_i$ (that is, to hit the parent of the root of $T_i$).
The following lemma provides an exponential lower bound on $Z_i$.
\begin{lem}\label{lem:technical}
There exist positive constants $\alpha,\gamma>0$ such that, for all $i\ge 2$,
\[ \Pr[Z_i \ge \gamma\cdot 2^{i/(\alpha\log^2 i)}] \ge 1- \frac{1}{\log i}.\]
\end{lem}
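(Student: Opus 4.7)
My plan is to prove the lemma by strong induction on $i$, exploiting the recursive structure of $T_i$. The base case $i\le i_0$ for a constant $i_0$ is handled by choosing $\gamma$ small enough that $\gamma\cdot 2^{i_0/(\alpha\log^2 i_0)}\le 1$; since $Z_i\ge 1$ deterministically, the bound is vacuous for these finitely many values. For the inductive step, fix $i>i_0$ and decompose the trajectory from the root $r$ of $T_i$---which has one external parent and $i-1$ children rooting independent copies of $T_1,\ldots,T_{i-1}$---into \emph{excursions}: on each visit to $r$, with probability $1/i$ the walk exits $T_i$, and for each $j\in\{1,\ldots,i-1\}$ with probability $1/i$ it enters the copy of $T_j$, spending $1+Z_j$ more steps (with $Z_j$ a fresh independent sample) before returning to $r$. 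If $N$ is the number of excursions before exit (geometric with mean $i-1$) and $j_1,\dots,j_N$ are the labels of the entered subtrees (i.i.d.\ uniform on $\{1,\ldots,i-1\}$), then $Z_i=N+1+\sum_{k=1}^N Z_{j_k}\ge\sum_{k=1}^N Z_{j_k}$.

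Write $\psi(i)=2^{i/(\alpha\log^2 i)}$ and pick the window width $s=\alpha\log^2 i$. A Taylor expansion of $g(i)=i/(\alpha\log^2 i)$ gives $g(i)-g(i-s)=1-o(1)$, hence $\psi(i)/\psi(i-s)\le 2$ for all large enough $i$. So if two independent excursions into subtrees $T_{j_k}$ with $j_k\ge i-s$ each satisfy $Z_{j_k}\ge\gamma\psi(j_k)$, then $Z_i\ge 2\gamma\psi(i-s)\ge\gamma\psi(i)$. Call an excursion \emph{large} when $j_k\ge i-s$ and \emph{good} when additionally $Z_{j_k}\ge\gamma\psi(j_k)$; it then suffices to show that at least two good large excursions occur with probability $\ge 1-1/\log i$. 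I would prove this in three steps, each with failure $o(1/\log i)$. Step \emph{(i)}: with $n_0=16(i-1)\log\log i/s$, a direct computation gives $\Pr[N<n_0]=1-(1-1/i)^{n_0}=O(\log\log i/\log^2 i)$. Step \emph{(ii)}: conditional on $N\ge n_0$, the number of large excursions among the first $n_0$ is Binomial$(n_0,s/(i-1))$ with mean $16\log\log i$, so a Chernoff bound gives at least $8\log\log i$ large excursions except with probability at most $(\log i)^{-2}$. Step \emph{(iii)}: by the inductive hypothesis at each $j_k\ge i-s$, each large excursion is good independently with probability at least $1-1/\log(i-s)\ge 1-2/\log i$, so the chance that fewer than two of $8\log\log i$ large excursions are good is $O((\log i)^{-(8\log\log i-2)})$, which is super-polynomially small. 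A union bound over the three failures yields the required $1/\log i$ total for $\alpha$ large enough.

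The main obstacle I anticipate is the tight probability accounting: the inductive per-excursion failure $1/\log(i-s)$ is essentially the same as the target $1/\log i$, so a naive union bound over a constant number of large excursions already exceeds the failure budget. The resolution is to demand only \emph{two} good large excursions among many, leveraging the independence of excursions across fresh copies of $T_{j_k}$ to convert the individual $1/\log i$ failure probability into a product-type failure $(1/\log i)^{\Theta(\log\log i)}$ that is negligible. The width $s=\Theta(\log^2 i)$ is then forced by two competing requirements: $s$ must be large enough that $\psi(i)/\psi(i-s)$ stays bounded (so that a constant number of successful excursions suffice), yet small enough that $\log(i-s)$ remains essentially $\log i$ (so that the inductive hypothesis keeps its strength). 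It is precisely this tension that gives rise to the exponent $i/(\alpha\log^2 i)$ in the final bound.
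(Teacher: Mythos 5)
Your proof is correct and follows essentially the same route as the paper's: strong induction on $i$ with the base case absorbed into the choice of $\gamma$, an excursion decomposition at the root of $T_i$, and the key step of securing \emph{two} long excursions into the top window of $\Theta(\log^2 i)$ largest subtrees so that $2f(i-s)\ge f(i)$ closes the induction. The differences are only in the bookkeeping: you condition on the number of excursions $N$ and separate ``large'' from ``good,'' whereas the paper conditions on no exit during the first $i/(2\log i)$ visits to the root and bundles both requirements into a single event $D_j$.
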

\begin{proof}
\noindent
We proceed by induction on $i$. For convenience, define $g(i)=\alpha\log^2{i}$ and $f(i) = \gamma\cdot 2^{i/g(i)}$ for some $\alpha,\gamma>0$. For any $\alpha>0$ and $i\ge 2$, we can choose $\gamma>0$ such that $f(i)\le 1$; therefore, as $Z_i\ge 1$ with probability $1$, the claim holds trivially for any $i\le i^*$, where $i^*$ is a
 suitably large constant.

Now consider any $i\ge i^*$ and suppose the claim holds up to $i-1$. Every time the walk is on the root of $T_i$, it exits $T_i$ with probability $1/i$ (since the root of $T_i$ has $i$ neighbors: one parent and $i-1$ children). Therefore, letting $E_t\
$ be the event that the first $t$ times the walk is on the root of $T_i$ it does \emph{not} exit $T_i$, we have $\Pr[E_t]\ge 1-t/i$. Let $t=i/(2\log{i})$, and let $D_j$, $1\le j\le t$, be the event that, when it is on the root of $T_i$ for the $j$-th time,
the walk moves to the root of one of the subtrees $T_{i-g(i)},\ldots, T_{i-1}$ \emph{and} takes at least $f(i-g(i))$ steps to exit that subtree. For $1\le j\le t$, we have
\begin{align*}
\Pr[D_j ~|~ E_t] &\ge \frac{g(i)}{i}\cdot \Pr[Z_{i-g(i)} \ge f(i-g(i)) ]\\
&\ge \frac{g(i)}{i}\cdot\left(1-\frac{1}{\log(i-g(i))}\right),
\end{align*}
by the induction hypothesis on $Z_{i-g(i)}$. Letting $\chi_j$ be the indicator function of the event $D_j$ for $1\le j\le t$, the probability that at least two of the events
$D_j$ happen, given $E_t$, is lower bounded by:
\begin{align*}
\Pr\left[\sum_{j=1}^t \chi_j \ge 2 \ \Bigg|\ E_t\right] & \ge \Pr\left[ \sum_{j=1}^{t/2}\chi_j \ge 1, \sum_{j=t/2+1}^{t}\chi_j \ge 1 \ \Bigg|\ E_t\right]\\
& = \Pr\left[ \sum_{j=1}^{t/2}\chi_j \ge 1 \ \Bigg|\ E_t\right]^2.
\end{align*}
By union bound, we can write
\begin{align*}
&\Pr\left[ \sum_{j=1}^{t/2}\chi_j \ge 1 \ \Bigg|\ E_t\right]\\
&\quad \ge 1- \prod_{i=1}^{t/2}\left(1-\Pr[D_j| E_t]\right)\\
&\quad \ge 1- \left(1-\frac{g(i)}{i}\left(1-\frac{1}{\log(i-g(i))}\right)\right)^{t/2}\\
&\quad \ge 1- \exp{\left[-\frac{\alpha\log{i}}{4}\left(1-\frac{1}{\log(i-g(i))}\right)\right] } \ge 1-\frac{1}{i^{\alpha/8}},
\end{align*}
where the last step holds for $i$ sufficiently large so that $\log(i-g(i))\ge 2$. This implies that
\[ \Pr\left[\sum_{j=1}^t \chi_j \ge 2 \ \Bigg|\ E_t\right] \ge \left(1-\frac{1}{i^{\alpha/8}}\right)^2 \ge 1-\frac{2}{i^{\alpha/8}}. \]
Therefore, we conclude that
\begin{align*}
\Pr[Z_i\ge 2\cdot f(i-g(i))] &\ge \Pr\left[\sum_{j=1}^t \chi_j \ge 2\right]\\
&\ge \Pr\left[\sum_{j=1}^t \chi_j \ge 2 \ \Bigg|\ E_t\right]\Pr[E_t]\\
&\ge \left(1-\frac{2}{i^{\alpha/8}}\right)\left(1-\frac{t}{i}\right)
\ge 1-\frac{1}{\log{i}},
\end{align*}
where the last step holds by choosing $\alpha$ sufficiently large. The claim follows since $2\cdot f(i-g(i))\ge f(i)$.
\end{proof}

Note that a random walk starting at any node $u\neq r^*$ has to exit $T_m$ before hitting $r^*$. Therefore, an application of Lemma~\ref{lem:technical} to $T_m$ yields a
lower bound to the hitting time of $r^*$ when starting at any node $u\neq r^*$.

\begin{cor}\label{cor:hitting}
The hitting time of $r^*$ of a random walk starting at any node $u\neq r^*$ is $2^{\Omega(n/\log^2{n})}$ with high probability.
\end{cor}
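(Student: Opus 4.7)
The plan is to reduce the hitting time of $r^*$ from an arbitrary $u \neq r^*$ to the exit time $Z_m$ controlled by Lemma~\ref{lem:technical}. By construction, $T^*_m$ consists of $r^*$ together with a copy of $T_m$, and $r^*$ is joined to $T_m$ by a single edge to the root of $T_m$; call this root $\rho$. Any walk started at $u \in T_m$ therefore remains in $T_m$ until it first traverses the edge $(\rho,r^*)$, so the hitting time $H_u$ of $r^*$ from $u$ coincides exactly with the exit time of the walk from $T_m$ starting at $u$.

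Next, I argue that $H_u$ stochastically dominates $Z_m$ for every $u \in T_m$. Because $\rho$ is the only node of $T_m$ adjacent to $r^*$, the walk must visit $\rho$ at least once before leaving $T_m$; let $\tau$ be the first hitting time of $\rho$ starting from $u$, with the convention $\tau = 0$ when $u = \rho$. By the strong Markov property applied at the stopping time $\tau$, the post-$\tau$ walk is a fresh random walk from $\rho$, whose exit time from $T_m$ is by definition an independent copy $Z'_m$ of $Z_m$. Thus $H_u = \tau + Z'_m \ge Z'_m$, which is the desired stochastic domination.

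Combining the two observations with Lemma~\ref{lem:technical} applied at $i=m$ gives $H_u \ge \gamma \cdot 2^{m/(\alpha\log^2 m)}$ with probability at least $1 - 1/\log m$. To cast this in the form stated in the corollary, and in the form needed for Theorem~\ref{thm:bad}, I specialize to the regime $m = h(M) = \Theta(n)$, which is where the corollary is applied in the proof of Theorem~\ref{thm:bad}; then $m/\log^2 m = \Omega(n/\log^2 n)$ and the failure probability $1/\log m$ is $o(1)$. I do not anticipate any serious obstacle here: the only subtle point is verifying the stochastic domination uniformly over arbitrary starting nodes $u \in T_m$, which is handled cleanly by the strong Markov argument at the first hit of $\rho$ (both when $u = \rho$, in which case $\tau = 0$ and $H_u \stackrel{d}{=} Z_m$, and when $u \neq \rho$, in which case $\tau \ge 1$ only strengthens the bound).
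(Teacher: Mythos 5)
Your proposal is correct and follows essentially the same route as the paper: the walk must exit $T_m$ through its root $\rho$ before reaching $r^*$, so Lemma~\ref{lem:technical} applied at $i=m$ (with $m=h(M)=\Theta(n)$) gives the exponential lower bound. The only difference is that you make explicit, via the strong Markov property at the first hit of $\rho$, the stochastic domination $H_u\ge Z_m$ for arbitrary starting nodes $u$, a step the paper leaves implicit.
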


\subsection{The dynamics of the algorithm starting from $M\in\M_n$}
For ease of presentation, we set the probability parameters of the algorithms to $p=q=1$. Setting $p=1$ means that a poor leader always proposes to a follower. Setting $q=1$ means that a follower always accepts an incoming request. Our result holds for any choice of $p$ and $q$.

By Lemma \ref{lem:M121}, every matching $M\in\M_n\cup\{M^*_n\}$ is uniquely identified by the set $\I(M)=\{k:(\ell_k,f_k)\notin M\}$.

\begin{defn}[The height of a matching]
Let $M\in\M_n$, $\I(M)=\{i_0,\ldots,i_K\}$. The height $h(M)$ of $M$ is defined as follows. If $K=0$ then $h(M)=0$. If $K\ge 1$ then $h(M)=i_{K-1}\in\{1\,\ldots,n-1\}$.
\end{defn}

For a matching $M\in\M_n$ such that $h(M)>0$ we can write $\I(M)=\{i_0,\ldots,h(M),i_K\}$.
For each $t\ge 0$, let $M(t)$ be the matching at the beginning of round $t$ of the algorithm, and for ease of notation let $\I(t)=\I(M(t))$. For a matching $M\in\M_n$ let
$$\tau^*(M)=\min\big\{t:M(t)=M^*_n\vert M(0)=M\big\}$$
be the number of steps that the algorithm needs to reach the stable matching starting from $M$.

Note that, with $p=q=1$, $t^*(M)=1$ for every $M\in\M_n$ such that $h(M)=0$ (that is, $\vert\I(M)\vert=1$), since according to the algorithm leaders prefer unmatched followers. We are interested in relating $\tau^*(m)$ and $h(M)$ for every matching $M\in\M_n$ such that $h(M)>0$ (that is, $\vert\I(M)\vert>1$).

We study how the matching evolves over time through the Markov process $\{\I(t):0\le t\le \tau^*(M)\}$. Since $\I(M^*_n)=\emptyset$, $\tau^*(M)=\min\{t:\I(t)=\emptyset\}$. The state space of the Markov process is given by the set $\s$ defined in Lemma \ref{lem:M121}. The transition probabilities are characterized by the following lemma.

\begin{lem}
\label{lem:transProb}
Conditional on $\I(t)=I\in\s$, $\vert I\vert>1$, the transition probabilities at time $t$ are given by
\begin{align*}
\Pr\Big(\I(t+1)=I'\big|\I(t)=I\Big)=\frac{1}{\min I}
\qquad\text{if}\quad
I'\in \Big\{ I\cup\{k\}:k<\min I \Big\}\cup \Big\{ I\backslash\{\min I\} \Big\},
\end{align*}
and $0$ otherwise. Moreover $\Pr(\I(t+1)=\emptyset\vert\I(t)=\emptyset)=1$, and $\Pr(\I(t+1)=\emptyset\vert\I(t)=I)=1$ for every $I$ sich that $\vert I\vert=1$.
\end{lem}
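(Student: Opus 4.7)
The plan is to unpack a single round of the algorithm in closed form, using the explicit description of $M(t)$ in terms of $\I(t)$ provided by Lemma~\ref{lem:Mprop} and Lemma~\ref{lem:M121}. First I would write $I = \{i_0 < i_1 < \ldots < i_K\}$ with $K \ge 1$, and read off from those lemmas that $M(t)$ has the unique poor leader $\ell_{i_0}$, the unique unmatched follower $f_{i_K}$, non-horizontal matching edges $(\ell_{i_{k+1}}, f_{i_k})$ for $0 \le k \le K-1$, and horizontal matching edges $(\ell_j, f_j)$ for every $j \notin I$. In particular, every leader other than $\ell_{i_0}$ is stable, so according to Table~\ref{alg:leader} only $\ell_{i_0}$ sends a request during round $t+1$, and since $p=1$ it does so with probability $1$.

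Next I would identify the domain from which $\ell_{i_0}$ samples. Its neighborhood is $N_{\ell_{i_0}} = \{f_1, \ldots, f_{i_0}\}$. By Lemma~\ref{lem:Mprop}(3), every $f_k$ with $k < i_0$ is matched to $\ell_k$ via a horizontal edge, and by Lemma~\ref{lem:Mprop}(4b), $f_{i_0}$ is matched to $\ell_{i_1}$. Thus every follower in $N_{\ell_{i_0}}$ is currently matched, so the ``else'' branch of Table~\ref{alg:leader} is taken and $\ell_{i_0}$ selects $f_k$ uniformly at random from $\{f_1, \ldots, f_{i_0}\}$, i.e., each with probability $1/i_0 = 1/\min I$. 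Since $q=1$, the chosen follower always accepts.

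I would then split into two cases based on $k$, using the fact that accepting a new request means dropping the currently matched edge at $f_k$ and adding the edge $(\ell_{i_0}, f_k)$. If $k = i_0$, the edge $(\ell_{i_1}, f_{i_0})$ is removed and $(\ell_{i_0}, f_{i_0})$ is added, so $\ell_{i_0}$ becomes stable, $\ell_{i_1}$ becomes the new poor leader, and the non-horizontal edges are now $(\ell_{i_{k+1}}, f_{i_k})$ for $1 \le k \le K-1$; hence $\I(t+1) = I \setminus \{i_0\} = I \setminus \{\min I\}$. If $k < i_0$, the horizontal edge $(\ell_k, f_k)$ is removed and $(\ell_{i_0}, f_k)$ is added, turning $\ell_k$ poor while keeping all other non-horizontal edges unchanged, so $\I(t+1) = I \cup \{k\}$ with $k < \min I$. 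Each of the $i_0$ possible transitions has probability $1/\min I$, summing to $1$, matching the lemma.

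Finally, for the degenerate cases: $\I(t) = \emptyset$ corresponds to the stable matching $M^*_n$, in which no leader sends a request and $\I(t+1) = \emptyset$ trivially; while if $\I(t) = \{i_0\}$, then $f_{i_0}$ is itself the unique unmatched follower and lies in $N_{\ell_{i_0}}$, so the preference for unmatched followers in Table~\ref{alg:leader} forces $\ell_{i_0}$ to request $f_{i_0}$ with probability $1$, producing $M(t+1) = M^*_n$ and $\I(t+1) = \emptyset$. The only step that requires care is verifying that $f_{i_0}$ is matched rather than unmatched when $|I| > 1$ (so that the uniform sampling is over all of $\{f_1, \ldots, f_{i_0}\}$ rather than a strict subset), and this is exactly what Lemma~\ref{lem:Mprop}(4b) supplies.
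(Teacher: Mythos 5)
Your proof is correct and follows essentially the same route as the paper's: identify $\ell_{\min I}$ as the unique poor leader via Lemma~\ref{lem:Mprop}, observe that with $p=q=1$ it proposes to a uniformly random follower in $N_{\ell_{\min I}}=\{f_1,\ldots,f_{\min I}\}$ who always accepts, and split on whether the chosen follower is $f_{\min I}$ (giving $I\backslash\{\min I\}$) or some $f_k$ with $k<\min I$ (giving $I\cup\{k\}$). Your explicit check that every follower in the neighborhood is matched when $\vert I\vert>1$ (so the ``else'' branch of the leader's algorithm applies and the sampling is uniform over the whole neighborhood) is a detail the paper leaves implicit, and it is exactly the right thing to verify.
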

\begin{proof}
The case of $\I(t)=\emptyset$ corresponds to the stable matching $M^*_n$, which is an absorbing state for the Markov process.
In the case of $\vert \I(t)\vert=1$, we have that $h(M)=0$, and $p=q=1$ implies that  that $\I(t+1)=\emptyset$.

Consider now $\vert I\vert>1$. Conditional on $\I(t)=I$, the poor leader is $\ell_{\min I}$ and has degree $\min I$ and neighborhood $N_{\min I}=\{f_1,\ldots,f_{\min I}\}$, and chooses one of the followers in $N_{\min I}$ uniformly at random. If $\ell_{\min I}$ chooses follower $f_k$ for some $k<\min I$ then the leader $\ell_k$ becomes poor, since by property (3) of Lemma~\ref{lem:Mprop} $\ell_k$ was matched to $f_k$ in $M(t)$, and we have that $\I(t+1)=I\cup\{k\}$.
If instead $\ell_{\min I}$ chooses follower $f_{\min I}$ (matched to $\ell_{\min(I\backslash \min I)}$ in $M(t)$ by property (4) of Lemma~\ref{lem:Mprop}), then $\I(t+1)=I\backslash\{\min I\}$.
\end{proof}

For every matching $M\in\M_n$ such that $h(M)>0$ and $\I(M)=\{i_0,\ldots,i_K\}$, define the matching $\LL(M)=\{(\ell_j,f_j):j\neq i_K\}$ and $\tau(M)=\min\{t:M(t)=\LL(M)\}$, and observe that $h(\LL(M))=0$ and $\tau^*(M)>\tau(M)$ (in particular, $\tau^*(M)=1+\tau(M)$ for $p=q=1$).

For every matching $M$ such that $\vert\I(M)\vert>1$, let $\R(M)$ be the set of the matchings in $\M_n$ that can be reached from $M$ (after one or multiple steps).
According to the transition probabilities defined by Lemma~\ref{lem:transProb}, it is easy to see that
\begin{align*}
\R(M)= \Big\{\LL(M)\Big\}\cup
\Big\{M'\in\M_n:I(M')=A\cup\{h(M),i_{K}\},A\subseteq\{1,\ldots,h(M)-1\}\Big\}.
\end{align*}
Observe that every $M'\in\R(M)\backslash\{\LL(M)\}$ has height $h(M')=h(M)$. The following lemma characterizes the one-to-one correspondence between matchings in  $\R(M)$ and nodes of the tree $T^*_{h(M)}$. 
\begin{lem}
Consider the mapping $\omega(\cdot)$ from $\R(M)$ to $T^*_{h(M)}$ defined as follows.
Let $\omega(\LL(M))=r$, where $r$ is the root of $T^*_{h(M)}$.
For $M'\in\R(M)\backslash\{\LL(M)\}$ and $\I(M')=I$, let $\omega(M')$ be the node of $T^*_{h(M)}$ with label $\min I$ and connected to the root with a path of nodes labeled by the sorted indexes in $I\backslash\{\min I\}$.
Then $\omega(\cdot)$ is a bijection.
\end{lem}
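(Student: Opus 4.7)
The plan is to establish the bijection by a counting-and-injectivity argument, exploiting the recursive structure of $T^*_{h(M)}$. First I would count both sides: from the characterization stated just before the lemma, $\R(M) = \{\LL(M)\} \cup \{M' : \I(M') = A \cup \{h(M), i_K\},\ A \subseteq \{1,\ldots,h(M)-1\}\}$, so $\vert \R(M) \vert = 2^{h(M)-1} + 1$. On the tree side, an easy induction on the recursion $\vert T_i \vert = 1 + \sum_{j=1}^{i-1} \vert T_j \vert$ with $\vert T_1 \vert = 1$ yields $\vert T_i \vert = 2^{i-1}$, hence $\vert T^*_{h(M)} \vert = 2^{h(M)-1} + 1$. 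Since the cardinalities match, it suffices to show that $\omega$ is well-defined and injective; surjectivity will then come for free.

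Next I would verify well-definedness by exhibiting, for each $M' \in \R(M) \setminus \{\LL(M)\}$, a unique node satisfying the stated description. Writing $A = \{a_1 < a_2 < \cdots < a_s\}$ and recalling that $\min \I(M') = a_1$ (or $h(M)$ if $s = 0$), I would navigate $T^*_{h(M)}$ from $r$ downward: descend first to the unique child of $r$ (the root of $T_{h(M)}$, labeled $h(M)$); then, at each node currently labeled $k$, descend into the child subtree whose root is labeled by the next (smaller) entry in the sequence $h(M), a_s, a_{s-1}, \ldots, a_1$. The strict inequalities $a_s < h(M)$ and $a_{i-1} < a_i$ ensure at each step that the required subtree $T_{a_i}$ exists as a child of the root of $T_k$, which by definition has children $T_1, \ldots, T_{k-1}$. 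The procedure terminates at a uniquely determined node labeled $a_1 = \min \I(M')$, whose path back to $r$ visits exactly the nodes labeled (in increasing order) $a_2, \ldots, a_s, h(M)$, and then $r$ itself. A symmetric argument shows that no other node of $T^*_{h(M)}$ satisfies the description, because any path from a node to $r$ in $T^*_{h(M)}$ has strictly increasing labels, so the label sequence is forced.

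For injectivity I would observe that two distinct matchings in $\R(M) \setminus \{\LL(M)\}$ correspond to distinct subsets $A \subseteq \{1,\ldots,h(M)-1\}$, hence to distinct descent sequences in $T^*_{h(M)}$ and thus to distinct images; moreover, no $M' \neq \LL(M)$ can map to $r$ since the descent always enters $T_{h(M)}$. Combined with $\omega(\LL(M)) = r$, this gives injectivity, and surjectivity follows from the cardinality match. The only delicate bookkeeping point, which I expect to be the main obstacle, is reconciling the label $i_K$ (which appears in $I \setminus \{\min I\}$) with the literal root label $h(M)+1$ of $T^*_{h(M)}$. I would resolve this by identifying the root's label as a placeholder for the unique element of $\I(M')$ exceeding $h(M)$, namely $i_K$; under this identification the labels visited on the path from $\omega(M')$ to $r$ are exactly the sorted indexes in $I \setminus \{\min I\}$. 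Everything else in the argument is structural and purely combinatorial.
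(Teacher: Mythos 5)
Your argument is correct. The paper in fact omits the proof of this lemma entirely (``it directly follows from the construction of the tree $T^*_{h(M)}$ and the mapping $\I(\cdot)$''), so there is nothing to compare line by line; what you supply is the intended direct verification, with a small efficiency: you get surjectivity for free from the cardinality match $\vert\R(M)\vert = 2^{h(M)-1}+1 = \vert T^*_{h(M)}\vert$ rather than exhibiting a preimage for every node (though your navigation argument reversed would give that too). Your descent procedure and the observation that labels strictly increase along any path to the root correctly establish well-definedness and injectivity. You also correctly identified the one genuine imprecision in the statement itself: the root of $T^*_{h(M)}$ carries the literal label $h(M)+1$, which need not equal $i_K$, so the phrase ``labeled by the sorted indexes in $I\backslash\{\min I\}$'' only holds under the identification of the root's label with $i_K$ (the common largest element of $\I(M')$ for all $M'\in\R(M)$); your resolution is the one the authors evidently intend, since the labels are used only to encode the recursive branching structure, not to match $i_K$ numerically.
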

The proof is omitted since it directly follows from the construction of the tree $T^*_{h(M)}$ and the mapping $\I(\cdot)$.

\begin{lem}
The stochastic process $\{\I(t):0\le t\le\tau(M)\vert M(0)=M\}$ is equivalent to a random walk on $T^*_{h(M)}$ starting at $\omega(M)$.
\end{lem}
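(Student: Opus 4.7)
The plan is to leverage the bijection $\omega$ by checking that the one-step transition kernel of the Markov chain $\{\I(t)\}$ agrees, under $\omega$, with that of a simple random walk on $T^*_{h(M)}$ at every state in $\R(M)\setminus\{\LL(M)\}$. Since $\omega(M(0))=\omega(M)$ by construction, and $\tau(M)$ is by definition the hitting time of $\LL(M)=\omega^{-1}(r^*)$, once the transitions are shown to match, the strong Markov property immediately yields equivalence in distribution of the two processes up to time $\tau(M)$.

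Fix $M'\in\R(M)\setminus\{\LL(M)\}$, write $I=\I(M')=A\cup\{h(M),i_K\}$ with $A\subseteq\{1,\ldots,h(M)-1\}$, and set $i=\min I\le h(M)$. By Lemma~\ref{lem:transProb}, from $M'$ the chain moves with probability $1/i$ to each of the $i$ matchings whose index sets are $I\setminus\{i\}$ and $I\cup\{k\}$ for $k\in\{1,\ldots,i-1\}$. On the tree side, the defining description of $\omega$ places $\omega(M')$ at a node labeled $i$ inside $T_{h(M)}\subset T^*_{h(M)}$; by the recursive construction, this node is itself the root of a copy of $T_i$ and thus has exactly $i$ neighbors, namely its $i-1$ children labeled $1,\ldots,i-1$ plus its parent. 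A simple random walk visits each with probability $1/i$, so both processes pick uniformly among candidate sets of size $i$.

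To match the two candidate sets under $\omega$, I would use the fact that the ancestors of $\omega(M')$ in $T^*_{h(M)}$ are labeled by the sorted elements of $I\setminus\{i\}$, with the largest of these identified with the root $r^*$. Removing $i$ from $I$ therefore produces the matching whose image is the parent of $\omega(M')$, yielding the transition $I\to I\setminus\{i\}$. Conversely, for each $k<i$, the matching $M''$ with $\I(M'')=I\cup\{k\}$ has $\min\I(M'')=k$ and the same ancestor chain as $\omega(M')$, so $\omega(M'')$ is precisely the child of $\omega(M')$ labeled $k$. The boundary case $i=h(M)$ (i.e.\ $A=\emptyset$) is treated identically: the parent of $\omega(M')$ is $r^*$, and the transition $\{h(M),i_K\}\to\{i_K\}=\I(\LL(M))$ corresponds to moving up to the root.

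The only genuine subtlety, rather than an obstacle, is the bookkeeping observation that $i_K$ never leaves $\I(t)$ before $\tau^*(M)$: the unmatched follower $f_{i_K}$ is not in the neighborhood of any poor leader $\ell_{\min\I(t)}$, since $\min\I(t)\le h(M)<i_K$ implies $f_{i_K}\notin N_{\ell_{\min\I(t)}}$, and hence the chain is confined to $\R(M)$ throughout the interval $[0,\tau(M)]$. This confinement is what makes the local comparison above a complete description of the dynamics and therefore sufficient to conclude.
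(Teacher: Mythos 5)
Your proposal is correct and follows essentially the same route as the paper: both verify that the bijection $\omega$ carries the transition structure of Lemma~\ref{lem:transProb} (moves to $I\setminus\{\min I\}$ or to $I\cup\{k\}$, $k<\min I$, each with probability $1/\min I$) onto the parent/children adjacency of the node labeled $\min I$ in $T^*_{h(M)}$. Your additional checks --- that the node labeled $i$ has exactly $i$ neighbors so the uniform probabilities match, and that $i_K$ and $h(M)$ persist in $\I(t)$ so the chain stays in $\R(M)$ until $\tau(M)$ --- are details the paper leaves implicit, and they only strengthen the argument.
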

\begin{proof}
It suffices to show that the transition probabilities between two matchings $M_1,M_2\in\R(M)$ are nonzero if and only if the nodes $\omega(M_1)$ and $\omega(M_2)$ are adjacent in $T^*_{h(M)}$.
To prove the ``only if'' direction, assume that $M_1,M_2\in\R(M)$ are such that there is a nonzero transition probability from $M_1$ to $M_2$ (and therefore from $M_2$ to $M_1$). Let $\I(M_1)=I_1$ and $\I(M_2)=I_2$. According to the transition probabilities given above, there are two possible cases. In the first case, $I_2=I_1\cup\{k\}$ for some $k<\min I_1$, and $\omega(M_1)$ is a child of $\omega(M_1)$. In the second case $I_2=I_1\backslash \{\min I_1\}$ and $\omega(M_2)$ is the parent of $\omega(M_1)$. The proof of the other direction is similar.
\end{proof}
To summarize, the number of steps that the algorithm needs to reach the stable matching of $G_n$ starting from $M\in\M_n$ with $h(M)>0$ is upper bounded by the time $\tau(M)$ to reach the matching $\LL(M)$, and reaching $\LL(M)$  is equivalent to reaching the root of $T^*_{h(M)}$ starting from the node $\omega(M)$.
By Corollary~\ref{cor:hitting}, $\tau(M)$ is exponentially large in $h(M)$ with high probability. To complete the proof of the theorem, we show that, for any constant $0<\gamma<1$, a $1-O(n2^{-(1-\gamma)n})$ fraction of the matchings $M\in\M_n$ have $h(M)\ge \gamma n$. This is done through a counting argument.

\subsection{The fraction of the matchings  $M\in\M_n$ such that $h(M)\ge\gamma n$}

Let $N$ be the number of matchings in $\M_n$. Fixed a constant $0<\gamma<1$, let $\M_{\gamma}=\{M\in\M_n:h(M)<\gamma n\}$ and let $N_{\gamma}=\vert\M_{\gamma}\vert$.
For $j=0,\ldots,n-1$, let $N(j)$ be the number of matchings $M\in\M_n$ such that $h(M)=j$.
It follows that
$$
N  =\sum_{j=0}^{n-1}N(j),\qquad
N_{\gamma} \le\sum_{j=0}^{\lceil\gamma n\rceil-1}N(j).
$$
\begin{lem}
$N(0)=n$ and 
$N(j)=(n-j)2^{j-1}$ for all $j=1,\ldots,n-1$.
\end{lem}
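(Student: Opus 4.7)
The plan is to use the bijection $\I(\cdot) : \M_n \cup \{M_n^*\} \to \s$ established in Lemma~\ref{lem:M121}, which says that every matching $M\in\M_n$ is uniquely identified by the sorted set $\I(M)=\{i_0<i_1<\cdots<i_K\}\subseteq\{1,\ldots,n\}$ of indices of leaders not matched via horizontal edges. Under this bijection, counting matchings of a given height reduces to counting subsets of $\{1,\ldots,n\}$ with a prescribed structure. This makes the argument a short combinatorial counting.

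For the case $h(M)=0$, recall that $h(M)=0$ is defined exactly when $K=0$, i.e.\ when $\I(M)$ is a singleton $\{i_0\}\subseteq\{1,\ldots,n\}$. There are $n$ such singletons, so $N(0)=n$.

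For $h(M)=j$ with $1\le j\le n-1$, the definition requires $K\ge 1$ and $i_{K-1}=j$. Equivalently, $\I(M)$ is a set of the form
\[
\I(M)=A\cup\{j,i_K\},\qquad A\subseteq\{1,\ldots,j-1\},\quad i_K\in\{j+1,\ldots,n\},
\]
where $A$ consists of the (possibly empty) elements $i_0<\cdots<i_{K-2}$, all of which must lie below $j=i_{K-1}$, and $i_K$ must be the unique element strictly above $j$. The choices of $A$ and $i_K$ are independent: there are $2^{j-1}$ subsets $A\subseteq\{1,\ldots,j-1\}$ and $n-j$ admissible values of $i_K$. Multiplying gives $N(j)=(n-j)\,2^{j-1}$.

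There is no real obstacle here; the only point to double-check is that every set of the described form actually arises as $\I(M)$ for some $M\in\M_n$ with $h(M)=j$ (and not, say, some matching of larger deficit). This is immediate from the surjectivity half of Lemma~\ref{lem:M121}: for any nonempty $A'\subseteq\{1,\ldots,n\}$, the explicit matching $M=\{(\ell_{i_{k+1}},f_{i_k}):0\le k\le K-1\}\cup\{(\ell_k,f_k):k\notin A'\}$ constructed there lies in $\M_n$ and satisfies $\I(M)=A'$. Applying this to $A'=A\cup\{j,i_K\}$ yields a matching with $h(M)=j$, completing the count.
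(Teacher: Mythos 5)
Your proof is correct and follows essentially the same route as the paper: both count, via the bijection $\I(\cdot)$ of Lemma~\ref{lem:M121}, the sets $\{i_0,\ldots,i_{K-1},i_K\}$ with $i_{K-1}=j$, getting $2^{j-1}$ choices for the part below $j$ and $n-j$ choices for $i_K$. Your extra remark on surjectivity is a harmless (and correct) double-check that the paper leaves implicit.
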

\begin{proof}
$N(0)=n$ since there are $n$ matchings $M$ with $h(M)=0$, that is, the matchings $\{(\ell_j,f_j):j\neq k\}$ for $1\le k\le n$.

Fix $j\in\{1,\ldots,n-1\}$. By Lemma~\ref{lem:M121}, a matching $M\in\M_n$ with $h(M)=j$ is uniquely identified by a set
$\I(M)=\{i_0,\ldots,i_{K-1},i_K\}$ for some $1\le K\le n-1$ and $i_{K-1}=j$. Since $\I(\cdot)$ is a bijection, to determine $N(j)$ we need to count all subsets of $\{1\ldots,n\}$ of the form $\{i_0,\ldots,j,i_K\}$. There are $2^{j-1}$ subsets of $\{1,\ldots,j-1\}$ and $n-j$ ways to choose $i_K\in\{j+1,\ldots,n\}$, thus $N(j)=(n-j)2^{j-1}$.

\end{proof}

We now show that for any constant $0<\gamma<1$, the fraction of matchings $M\in\M_n$ such that $h(M)<\gamma n$ goes to zero exponentially fast in $n$
\begin{lem}
Fix $0<\gamma<1$. Then, $N_{\gamma}/N=O(n2^{-(1-\gamma)n})$.
\end{lem}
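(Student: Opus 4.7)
The plan is to bound $N$ from below by a geometric quantity of order $2^n$ and to bound $N_\gamma$ from above by something of order $n\cdot 2^{\gamma n}$, so that the ratio is $O(n\, 2^{-(1-\gamma)n})$ as claimed.

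For the lower bound on $N$, I would simply isolate the single term $N(n-1)=(n-(n-1))\cdot 2^{n-2}=2^{n-2}$ from the sum $N=\sum_{j=0}^{n-1}N(j)$. Since every $N(j)\ge 0$, this already yields $N\ge 2^{n-2}$, which is enough for the asymptotics; if a tighter bound is preferred one can instead evaluate $\sum_{j=1}^{n-1}(n-j)2^{j-1}$ exactly by the substitution $k=n-j$ and the identity $\sum_{k\ge 1}k\,2^{-k}=2$ to obtain $N=\Theta(2^n)$, but the single-term lower bound suffices.

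For the upper bound on $N_\gamma$, I would use the trivial estimate $N(j)=(n-j)2^{j-1}\le n\cdot 2^{j-1}$ for each $1\le j\le n-1$ and $N(0)=n$, and then sum the geometric series:
\begin{equation*}
N_\gamma \le \sum_{j=0}^{\lceil \gamma n\rceil-1}N(j) \le n + n\sum_{j=1}^{\lceil \gamma n\rceil-1}2^{j-1} \le n\cdot 2^{\lceil \gamma n\rceil}.
\end{equation*}

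Combining the two bounds gives
\begin{equation*}
\frac{N_\gamma}{N} \le \frac{n\cdot 2^{\lceil \gamma n\rceil}}{2^{n-2}} = O\!\left(n\,2^{-(1-\gamma)n}\right),
\end{equation*}
which is the desired statement. There is no real obstacle here: the lemma is a straightforward counting/geometric-series estimate, and the only thing to be careful about is keeping the ceiling $\lceil \gamma n\rceil$ in the exponent so that the constants are absorbed into the $O(\cdot)$. Combined with the earlier reduction to a random walk on $T^*_{h(M)}$ and Corollary~\ref{cor:hitting}, this completes the proof of Theorem~\ref{thm:bad}.
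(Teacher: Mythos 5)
Your proposal is correct and follows essentially the same route as the paper: bound $N_\gamma$ from above and $N$ from below using the counts $N(j)=(n-j)2^{j-1}$, then take the ratio. The only difference is that the paper evaluates both sums exactly (obtaining $N=2^n-1$), whereas you use the single term $N(n-1)=2^{n-2}$ and a crude geometric-series bound $N_\gamma\le n\cdot 2^{\lceil\gamma n\rceil}$, which is slightly simpler and fully sufficient for the stated $O(n2^{-(1-\gamma)n})$ asymptotics.
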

\begin{proof}
We first compute $N$.
$$N=\sum_{i=0}^{n-1}N(i)=n+\sum_{i=1}^{n-1}(n-i)2^{i-1}=n+n\sum_{i=0}^{n-2}2^i-\sum_{i=1}^{n-1}i2^{i-1}.$$
The second sum can be shown (e.g. by induction) to be equal to $(n-1)+(n-2)(2^{n-1}-1)$. Therefore,
$$N=n+n(2^{n-1}-1)-(n-1)-(n-2)(2^{n-1}-1)=2^n-1=\Omega(2^n).$$

Similarly, letting $k=\lceil \gamma n\rceil$ we have that,
\begin{align*}
N_{\gamma}&\le \sum_{i=0}^{k -1}N(i)=n+n\sum_{i=0}^{k-2}2^i-\sum_{i=1}^{k-1}i2^{i-1}\\
&=n+n(2^{k-1}-1)-(k-1)-(k-2)(2^{k-1}-1)\\
&=2^{k-1}(n-k-2)-1=O(n2^{\lceil \gamma n\rceil}).
\end{align*}
Therefore, the fraction of matchings in $\M_n$ with height $h(M)<\gamma n$ is $N_{\gamma}/N=O(n2^{-(1-\gamma) n})$.
\end{proof}

\bibliographystyle{plain}
\bibliography{CDC_2012.bib}

\end{document}